\newenvironment{proof}[1][Proof]{\par
  \noindent\textit{#1.} \ignorespaces
}{%
  \hfill$\blacksquare$\par
}
\definecolor{orcidlogocol}{HTML}{A6CE39}
\tikzset{
  orcidlogo/.pic={
    \fill[orcidlogocol] svg{M256,128c0,70.7-57.3,128-128,128C57.3,256,0,198.7,0,128C0,57.3,57.3,0,128,0C198.7,0,256,57.3,256,128z};
    \fill[white] svg{M86.3,186.2H70.9V79.1h15.4v48.4V186.2z}
                 svg{M108.9,79.1h41.6c39.6,0,57,28.3,57,53.6c0,27.5-21.5,53.6-56.8,53.6h-41.8V79.1z M124.3,172.4h24.5c34.9,0,42.9-26.5,42.9-39.7c0-21.5-13.7-39.7-43.7-39.7h-23.7V172.4z}
                 svg{M88.7,56.8c0,5.5-4.5,10.1-10.1,10.1c-5.6,0-10.1-4.6-10.1-10.1c0-5.6,4.5-10.1,10.1-10.1C84.2,46.7,88.7,51.3,88.7,56.8z};
  }
}
\newcommand\orcidicon[1]{\href{https://orcid.org/#1}{\mbox{\scalerel*{
\begin{tikzpicture}[yscale=-1,transform shape]
\pic{orcidlogo};
\end{tikzpicture}
}{|}}}}
\useunder{\uline}{\ul}{}
\definecolor{bittersweet}{rgb}{1.0, 0, 0.5}
\newcommand{\ZH}[1]{{\color{black}#1}}
\pgfplotsset{compat=1.17} 
\newtheorem{theorem}{Theorem}
\begin{document}

\title{OffsetCrust: Variable-Radius Offset Approximation with Power Diagrams}

\author{
Zihan Zhao \orcidicon{0009-0003-5962-4870}, Pengfei Wang \orcidicon{0000-0002-2079-275X}, Minfeng Xu \orcidicon{0000-0002-6553-5191}, Shuangmin Chen \orcidicon{0000-0002-0835-3316}, Shiqing Xin* \orcidicon{0000-0001-8452-8723}, Changhe Tu \orcidicon{0000-0002-1231-3392}, and Wenping Wang \orcidicon{0000-0002-2284-3952} ~\IEEEmembership{Fellow,~IEEE}

\IEEEcompsocitemizethanks{
\IEEEcompsocthanksitem Zihan Zhao, Pengfei Wang, Shiqing Xin, and Changhe Tu are with the School of Computer Science and Technology, Shandong University, Qingdao 266237, China (e-mail: \href{mailto:zihanzhao2000@gmail.com}{zihanzhao2000@gmail.com}; \href{pengfei1998@foxmail.com}{pengfei1998@foxmail.com}; \href{mailto:xinshiqing@sdu.edu.cn}{xinshiqing@sdu.edu.cn}; \href{mailto:chtu@sdu.edu.cn}{chtu@sdu.edu.cn}).
\IEEEcompsocthanksitem Minfeng Xu is with the School of Computer Science and Technology, Shandong University of Finance and Economics, Jinan 250014, China (e-mail: \href{mailto:mfxu_sdu@163.com}{mfxu\_sdu@163.com}).
\IEEEcompsocthanksitem Shuangmin Chen is with the School of Information and Technology, Qingdao University of Science and Technology, Shandong 266101, China,
and with the Shandong Key Laboratory of Deep Sea Equipment Intelligent Networking, Qingdao, China
(e-mail: \href{mailto:csmqq@163.com}{csmqq@163.com}).
\IEEEcompsocthanksitem Wenping Wang is with the Department of Computer Science and Engineering, Texas A\&M University, College Station, TX 77843 USA (e-mail:
\href{mailto:wenping@tamu.edu}{wenping@tamu.edu}).
\IEEEcompsocthanksitem *Shiqing Xin is the corresponding author.
\IEEEcompsocthanksitem Manuscript received April 19, 2021; revised August 16, 2021.
}
\markboth{Journal of \LaTeX\ Class Files,~Vol.~14, No.~8, August~2021}%
{Shell \MakeLowercase{\textit{et al.}}: A Sample Article Using IEEEtran.cls for IEEE Journals}
}

\maketitle

\begin{abstract}
Offset surfaces, defined as the Minkowski sum of a base surface and a rolling ball, play a crucial role in geometry processing, with applications ranging from coverage motion planning to brush modeling. While considerable progress has been made in computing \emph{constant-radius} offset surfaces, computing \emph{variable-radius} offset surfaces remains a challenging problem.
In this paper, we present \emph{OffsetCrust}, a novel framework that efficiently addresses the variable-radius offsetting problem by computing a power diagram. Let~$\mathcal{R}$ denote the radius function defined on the base surface~$\mathcal{S}$. The power diagram is constructed from contributing sites, consisting of carefully sampled base points on~$\mathcal{S}$ and their corresponding off-surface points, displaced along $\mathcal{R}$-dependent directions. In the constant-radius case only, these displacement directions align exactly with the surface normals of~$\mathcal{S}$.
Moreover, our method mitigates the misalignment issues commonly seen in crust-based approaches through a lightweight fine-tuning procedure. We validate the accuracy and efficiency of {OffsetCrust} through extensive experiments, and demonstrate its practical utility in applications such as reconstructing original boundary surfaces from medial axis transform~(MAT) representations.
\end{abstract}

\begin{IEEEkeywords}
digital geometry processing, variable-radius offset, power diagram, medial axis transform
\end{IEEEkeywords}

\section{Introduction}
\label{sec:introduction}

Offset surfaces, defined as the Minkowski sum of a base surface and a rolling ball~\cite{rossignac1986offsetting}, with either constant or varying radius, play a fundamental role in geometry processing. They support a wide range of applications, including motion planning~\cite{singh2011robot} and brush-stroke simulation~\cite{held2021weighted}, as illustrated in Figure~\ref{fig:intro_offset}(a,c).

Given a base surface~$\mathcal{S}$ and a radius function~$\mathcal{R}$ defined on~$\mathcal{S}$, the offsetting problem can be formulated as an iso-surface extraction problem. We define a generalized distance function as follows:
\begin{equation}
    \phi(x) \coloneqq \min_{p \in \mathcal{S}} \left( \|x - p\| - \mathcal{R}(p) \right),
\label{eq:offset_distance_field}
\end{equation}
and extract the offset surface~$\mathcal{S}_\mathcal{R}^\text{off}$ as the zero level set, i.e.,~$\phi(x) = 0$.  
In particular, when~$\mathcal{S}$ is closed and orientable,~$\mathcal{S}_\mathcal{R}^\text{off}$ can be further divided into inward and outward offset layers.

\begin{figure}[!t]
    \centering
    \includegraphics[width=.98\linewidth]{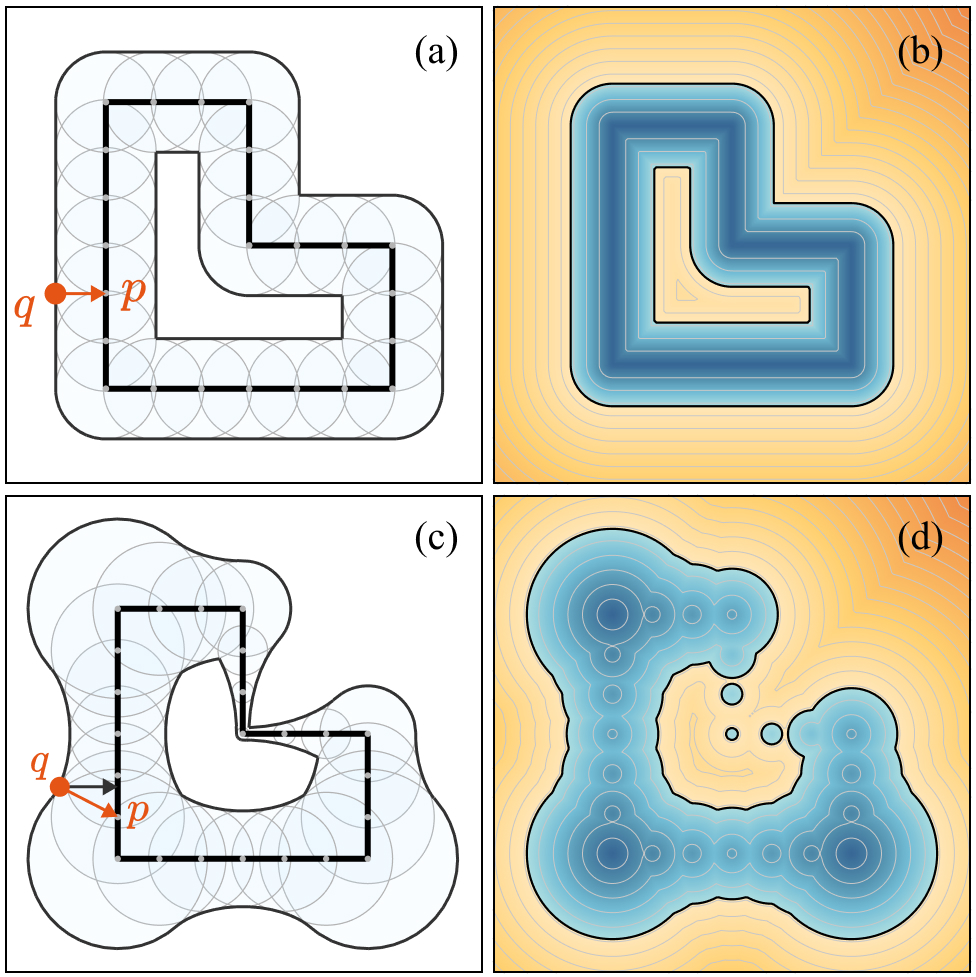}
    \caption{
    Offsetting a solid ``L''. 
    (a) Constant-radius offset: for a point \( q \), the minimizer \( p \) coincides with the closest point on the surface. 
    (b) The offset surface can be defined as the iso-surface of the corresponding distance function. 
    (c) Variable-radius offset: for a point \( q \), the minimizer \( p \) may differ from the closest point. 
    (d) Approximating Eq.~(\ref{eq:offset_distance_field}) by simply replacing the continuous boundary with a set of sample points leads to oscillation artifacts. Moreover, this na\"{i}ve approximation incurs a high computational cost.
    }
    \label{fig:intro_offset}
\end{figure}

As shown in Figure~\ref{fig:intro_offset}(a,b), a straightforward yet important observation is that, in the constant-radius case, the nearest point on~$\mathcal{S}$ serves as the minimizer of the generalized distance. As a result, the offset surface coincides exactly with the iso-surface of the unsigned distance function of~$\mathcal{S}$, evaluated at a fixed offset value. This property underpins most existing approaches~\cite{lorensen1998marching, chernyaev1995marching, ju2002dual, nielson2004dual, chen2019half, chen2021neural, chen2022neural, doi1991efficient, treece1999regularised, shen2021deep} to constant-radius offsetting. Given the significance of this task, several recent methods~\cite{wang2024pco, zint2023feature} have been proposed to improve surface quality and runtime performance. However, these methods are not applicable to the more general and challenging case of variable-radius offsetting.

As shown in Figure~\ref{fig:intro_offset}(c), we assign a non-uniform radius function to the boundary and visualize the resulting true variable-radius offset. In this scenario, for a point \( q \), the minimizer \( p \) may differ from the closest point on the surface, typically deviating from the surface normal at an angle. A na\"{i}ve approach to approximate Eq.~(\ref{eq:offset_distance_field}) is to replace the continuous boundary with a set of sampled points. However, this approximation introduces oscillation artifacts (see Figure~\ref{fig:intro_offset}(d)) and incurs a high computational cost.

In previous research, very few works have explicitly tackled this challenge. Some existing methods define variable-radius offsets differently, by specifying the displacement amount along surface normals. These approaches are typically formulated on parametric curves and surfaces~\cite{elber1997comparing, maekawa1999overview, pham1992offset, kim1993approximation}. However, they are not applicable when only a radius function is provided and must also contend with the complex issue of self-intersections~\cite{wallner2001self}.

The goal of this paper is to develop an efficient algorithm for the variable-radius offsetting problem that produces geometrically and topologically accurate results. Moreover, the sharp features inherent in the offset surface should be faithfully preserved. To this end, we propose \emph{OffsetCrust}, a crust-based framework designed to achieve these objectives. Our method proceeds as follows: we sample a sufficient number of points from the base surface and displace them along specially designed directions, computed based on the variation of the radius function~$\mathcal{R}$. By assigning different weights to base points and displaced points, the variable-radius offset can be effectively approximated as the subset of power diagram facets that separate these two types of sites (see Section~\ref{sec:insight}).

We further introduce a carefully designed sampling strategy (see Section~\ref{sec:sample_methods}) and a refinement technique (see Section~\ref{sec:vertex_refinement}) to address the misalignment issues inherent in crust-based methods. We validate the effectiveness of our approach through extensive experiments (see Section~\ref{sec:evaluation}) and demonstrate its utility in several modeling tasks (see Section~\ref{sec:applications}). Our code is available at \href{https://github.com/zih-an/offsetcrust.git}{\nolinkurl{https://github.com/zih-an/offsetcrust.git}}.

\section{Related Work}

\subsection{Constant-Radius Offset}
There is a large body of literature addressing the problem of constant-radius offsets. We categorize these approaches into four main groups.

\textbf{Parametric Offsets.}
Parametric curves and surfaces are widely used in CAD/CAM as continuous shape representations. In this setting, both the base surface and its offset can be analytically expressed, typically using B-spline curves or NURBS surfaces. 
A constant-radius offset displaces each point along the surface normal by a fixed distance \( d \), given by \( \hat{\mathbf{r}}(t) = \mathbf{r}(t) + d\mathbf{n}(t) \)~\cite{maekawa1999overview, pham1992offset, elber1997comparing}.  
However, this method requires tedious handling of self-intersections.

\textbf{Voxelization Representations.}
Constant-radius offsets can also be computed via the Minkowski sum of the shape with a discrete sphere. 
Li et al.~\cite{li2010gpu, li2014sweep, li2011voxelized} proposed GPU-accelerated approximations using voxelization. 
Beyond the Minkowski sum, Chen et al.~\cite{chen2019half} employed dexel structures~\cite{van1986real} for parallelizable offset computation. 
While voxel-based methods are easy to implement, they often suffer from high computational cost, limited accuracy, and the loss of sharp features.

\textbf{Distance Fields.}
Constant-radius offsets can be represented as level sets of signed distance fields (SDFs). Given a well-computed SDF (typically defined over a grid), the offset surface becomes an iso-surface extraction problem. Numerous works have explored this approach~\cite{lorensen1998marching, chernyaev1995marching, ju2002dual, nielson2004dual, chen2021neural, chen2022neural, doi1991efficient, treece1999regularised, shen2021deep}.  
To improve accuracy and preserve sharp features, various enhancements have been proposed. 
Qu et al.~\cite{qu2004feature} used offset distance fields with irregular, dense grids. 
Pavic et al.~\cite{pavic2008high} employed a variant of Dual Contouring~\cite{ju2002dual}, although their method struggled in concave regions. 
Zint et al.~\cite{zint2023feature} combined topology-adapted octrees with Dual Contouring and remeshing to produce high-quality meshes, though self-intersections remained unresolved.

\textbf{Generalized Voronoi Diagrams.}
Polygonal curves and surfaces are commonly used as discrete shape representations. 
Generalized Voronoi diagrams, constructed using line segments, circles, and arcs as sites, have been applied to compute constant-radius offsets for 2D polygons~\cite{Menelaos2004, held1991computational, held2016generalized, held2021weighted}.  
These diagrams partition the plane into curved cells, with the offset in each cell determined solely by its associated site. 
However, extending such diagrams to 3D remains a significant challenge.

\textbf{Point Cloud Based.}
Point clouds can also be directly used to compute morphological operations~\cite{calderon2014point}. Although such methods avoid oscillations via moving least squares, they may still introduce approximation errors and do not account for the piecewise-linear nature of mesh representations.

\subsection{Variable-Radius Offset}
Compared to constant-radius offsets, relatively few algorithms have been developed for variable-radius offsets, most of which focus on parametric inputs.  
Kim et al.~\cite{kim1993approximation} extended the concept of offsets to parametric surfaces with variable radii, but their approach is not applicable when only a radius function is given.  
Qun et al.~\cite{qun1997variable} introduced explicit formulas for computing the envelope of balls, enabling exact representation of variable-radius offsets—but at the cost of handling complex self-intersections.

For general triangle meshes, a common strategy approximates the offset surface as the union of primitives: spheres at vertices, cones along edges, and slabs on faces.  
Due to the high computational cost of this strategy, Woerl et al.~\cite{woerl2020variable} proposed a discretized, GPU-accelerated volumetric method, which still suffers from geometric inaccuracy.

Apart from the traditional definition of offsets, several variants have been proposed recently. Mitered Offset~\cite{cao2024robust} preserves all convex sharp features rather than producing the rounded shape typical of classical formulations. 
Topological Offset~\cite{zint2025topological} automatically adapts the offset distance locally, effectively producing a spatially varying offset that preserves correct topology.

\subsection{Crust-Based Methods} 
Crust-based methods typically rely on pairs of seed points to define a surface that separates one set of seeds from another, generally requiring the explicit construction of a Voronoi or power diagram.  
Careful seed placement is essential to prevent misaligned facets.

PowerCrust~\cite{amenta2001power} uses the Voronoi diagram of sample points to approximate the medial axis transform (MAT), identifying interior and exterior poles. The power diagram of these poles is then used to reconstruct the surface and approximate constant-radius offsets.  
VoroCrust~\cite{abdelkader2020vorocrust, abdelkader2018sampling} improves robustness by employing co-circular or co-spherical sampling strategies to reduce misalignment.

\begin{figure*}[!tbp]
    \centering
    \includegraphics[width=.99\linewidth]{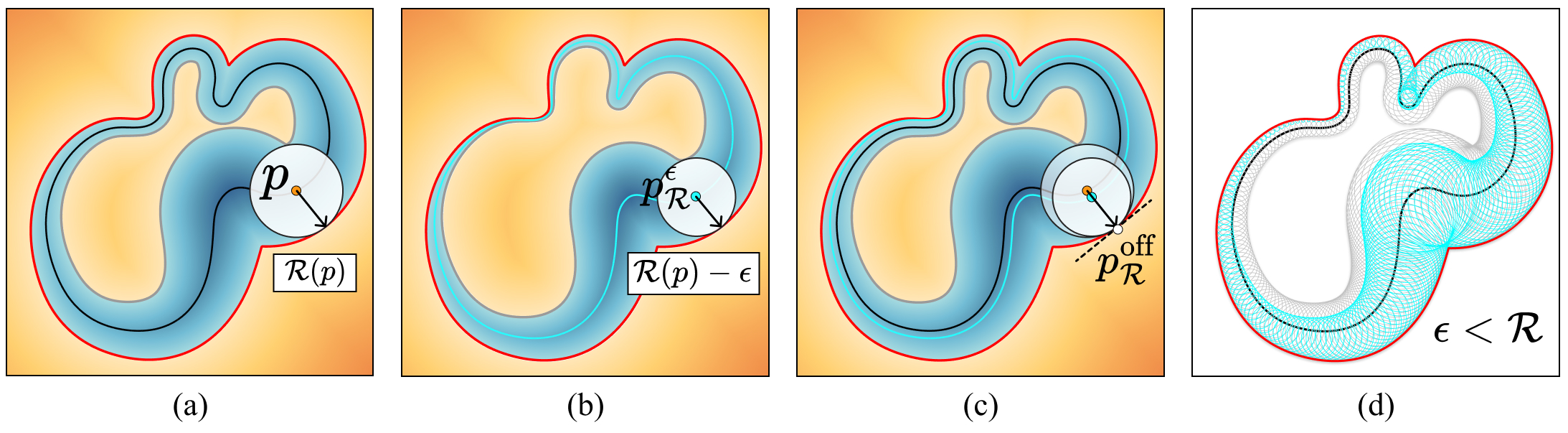}
    \caption{Insight into why the power diagram helps determine the variable-radius offset.  
(a) A point $p$ on the base surface;  
(b) A displaced point $p_{\mathcal{R}}^{\epsilon}$, located along the $\mathcal{R}$-dependent direction at a small distance $\epsilon$;  
(c) The three points $p$, $p_{\mathcal{R}}^{\epsilon}$, and the corresponding point on the offset surface $p_{\mathcal{R}}^{\text{off}}$ are colinear;  
(d) Visualization of the balls centered at the base sample points (gray)
and those centered at the displaced points (cyan). The value $\epsilon$ can be chosen sufficiently small, or at least not exceeding the specified radius.
In the power diagram, the bisector between $p$ and $p_{\mathcal{R}}^{\epsilon}$  
typically defines the tangent line at $p_{\mathcal{R}}^{\text{off}}$.
}
\vspace{-2mm}
\label{fig:insight_sdf}
\end{figure*}

\section{Insight}
\label{sec:insight}

We explain why power diagrams can be used to solve the variable-radius offsetting problem.  
Suppose we are given a smooth surface~$\mathcal{S}$, typically discretized as a triangular mesh~\(\mathcal{M}\),  
along with a user-specified radius function~$\mathcal{R}$.

\subsection{Continuous Setting}
Given a smooth, closed, and orientable surface~$\mathcal{S}$,  
and any positive radius function~$\mathcal{R} > 0$,  
the offset surface~$\mathcal{S}_{\mathcal{R}}^{\text{off}}$ 
consists of inward and outward layers.  
Each point~$p \in \mathcal{S}$ induces a ball centered at~$p$ with radius~$\mathcal{R}(p)$.  
As the offset surface is the envelope of these rolling balls,  
any point~$q$ on the surface of the ball centered at~$p$  
either lies on the final offset surface or is occluded by another ball.

As shown in Figure~\ref{fig:insight_sdf},  
assume a point~$p \in \mathcal{S}$ contributes to~$\mathcal{S}_{\mathcal{R}}^{\text{off}}$,  
yielding a corresponding offset point~$p_{\mathcal{R}}^{\text{off}}$. 
The segment~$\overline{pp_{\mathcal{R}}^{\text{off}}}$ lies entirely within  
the band between the inward and outward layers.  
The direction of this segment, denoted~$\boldsymbol{n}_{\mathcal{R}}^{\text{off}}(p)$,  
may differ from the surface normal at~$p$.

Let $\epsilon > 0$ be sufficiently small, and define:
\begin{equation}
p_{\mathcal{R}}^{\epsilon} := p + \epsilon \cdot \boldsymbol{n}_{\mathcal{R}}^{\text{off}}(p).
\end{equation}
Then, the three points $p$, $p_{\mathcal{R}}^{\epsilon}$, and $p_{\mathcal{R}}^{\text{off}}$ are colinear.  
We thus obtain:
\begin{equation}
\|p_{\mathcal{R}}^{\text{off}} - p\| - \mathcal{R}(p) = \|p_{\mathcal{R}}^{\text{off}} - p_{\mathcal{R}}^{\epsilon}\| - (\mathcal{R}(p) - \epsilon) = 0,
\label{eq:offset_union_distance_field_filter}
\end{equation}
or equivalently:
\begin{equation}
\|p_{\mathcal{R}}^{\text{off}} - p\|^2 - \mathcal{R}^2(p) = \|p_{\mathcal{R}}^{\text{off}} - p_{\mathcal{R}}^{\epsilon}\|^2 - (\mathcal{R}(p) - \epsilon)^2 = 0.
\label{eq:offset_union_distance_field_filter:square}
\end{equation}

This motivates the construction of two groups of points and the use of a power diagram to determine the offset surface; see Figure~\ref{fig:insight_sdf}.
In particular, the bisector between $p$ and $p_{\mathcal{R}}^{\epsilon}$ typically defines the tangent line at $p_{\mathcal{R}}^{\text{off}}$. As for the choice of~$\epsilon$, its value can be made sufficiently small, or at most equal to the specified radius.

\subsection{Discrete Representation via Power Diagram}
\label{sec:discrete_impl}

\begin{figure}[!tbp]
    \centering
    \includegraphics[width=.7\linewidth]{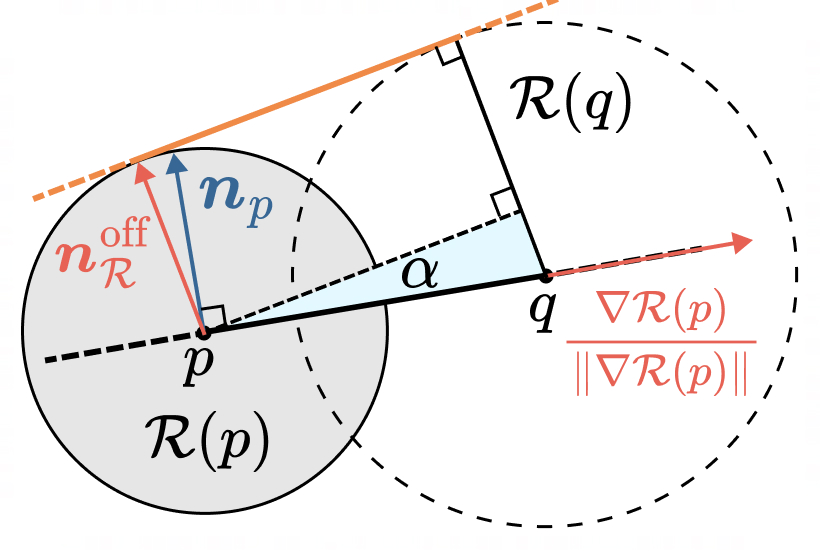}
    \caption{Proof to Theorem~\ref{thm:Displacement}.
    }
    \label{fig:variable_normal_deviate}
\end{figure}

\textbf{Power Diagram.}
Given a set of weighted points (also called \emph{sites})
\begin{equation}
\left\{(p_i, w_i) \mid p_i \in \mathbb{R}^3,\ w_i \in \mathbb{R} \right\}_{i=1}^n,
\end{equation}
the power diagram~\cite{aurenhammer1987power} partitions~\(\mathbb{R}^3\) into (possibly unbounded) convex regions.  
Each region~\(\Omega_i\), corresponding to a site~\((p_i, w_i)\), contains all points~\(x \in \mathbb{R}^3\) minimizing the {power distance}:
\begin{equation}
d^{\mathrm{pow}}_{p_i}(x) = \|x - p_i\|^2 - w_i.
\end{equation}
In our setting, we construct the power diagram using:
\begin{itemize}
    \item Points sampled from the base surface, each with weight~$\mathcal{R}^2(p)$;
    \item Displaced points offset in the direction~$\boldsymbol{n}_{\mathcal{R}}^{\text{off}}(p)$,  
    each with weight~$(\mathcal{R}(p) - \epsilon)^2$.
\end{itemize}
The computation of~$\boldsymbol{n}_{\mathcal{R}}^{\text{off}}(p)$ is described next.

\textbf{Displacement Direction.}
As previously noted, the displacement direction~$\boldsymbol{n}_{\mathcal{R}}^{\text{off}}(p)$  
may not align with the surface normal.  
Instead, it depends on the radius function~$\mathcal{R}$.

\begin{theorem}[Displacement Direction]\label{thm:Displacement}
If $\|\nabla \mathcal{R}(p)\| \leq 1$, the offset direction~$\boldsymbol{n}_{\mathcal{R}}^{\text{off}}(p)$ at point~$p$ can be constructed by rotating the unit normal~$\boldsymbol{n}_p$ by an angle~$\alpha = \arcsin(\|\nabla \mathcal{R}(p)\|)$ around the axis
\begin{equation}
\frac{\nabla \mathcal{R}(p)}{\|\nabla \mathcal{R}(p)\|} \times \boldsymbol{n}_p.
\end{equation}
\end{theorem}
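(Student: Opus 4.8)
The plan is to obtain a closed-form expression for $\boldsymbol{n}_{\mathcal{R}}^{\text{off}}(p)$ from the classical envelope (characteristic) equations of the two-parameter family of generating spheres, and then to recognise that expression as the asserted Rodrigues rotation. First I would fix a local chart $(u,v)\mapsto c(u,v)$ of $\mathcal{S}$ around $p$, with $c(u_0,v_0)=p$, chosen so that $\{c_u,c_v\}$ is an orthonormal basis of $T_p\mathcal{S}$ (any chart becomes such after a linear reparametrisation). Let $F(x;u,v)=\|x-c(u,v)\|^2-\mathcal{R}(c(u,v))^2$. Since $p_{\mathcal{R}}^{\text{off}}$ is the point at which the sphere centred at $p$ of radius $\mathcal{R}(p)$ touches the envelope $\mathcal{S}_{\mathcal{R}}^{\text{off}}$, it satisfies the characteristic system $F=F_u=F_v=0$ at $(u_0,v_0)$, which unwinds to
\[
(x-p)\cdot c_u=-\mathcal{R}(p)\,\mathcal{R}_u,\qquad (x-p)\cdot c_v=-\mathcal{R}(p)\,\mathcal{R}_v,\qquad \|x-p\|^2=\mathcal{R}(p)^2,
\]
where $\mathcal{R}_u,\mathcal{R}_v$ are the partials of $\mathcal{R}\circ c$ at $(u_0,v_0)$.

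Next I would decompose $x-p=a\,c_u+b\,c_v+h\,\boldsymbol{n}_p$ in the orthonormal frame $\{c_u,c_v,\boldsymbol{n}_p\}$. The first two equations give $a=-\mathcal{R}(p)\mathcal{R}_u$ and $b=-\mathcal{R}(p)\mathcal{R}_v$, i.e. the tangential part of $x-p$ equals $-\mathcal{R}(p)\,\nabla\mathcal{R}(p)$, where $\nabla\mathcal{R}(p)$ denotes the intrinsic surface gradient — a tangent vector with components $(\mathcal{R}_u,\mathcal{R}_v)$ in this orthonormal frame, so $\|\nabla\mathcal{R}(p)\|^2=\mathcal{R}_u^2+\mathcal{R}_v^2$. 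The norm equation then forces $h^2=\mathcal{R}(p)^2\bigl(1-\|\nabla\mathcal{R}(p)\|^2\bigr)$, which is nonnegative precisely because $\|\nabla\mathcal{R}(p)\|\le 1$; choosing $h=\mathcal{R}(p)\sqrt{1-\|\nabla\mathcal{R}(p)\|^2}>0$ picks the outward layer (the opposite root yields the inward one). Dividing by $\|x-p\|=\mathcal{R}(p)$ gives
\[
\boldsymbol{n}_{\mathcal{R}}^{\text{off}}(p)=-\nabla\mathcal{R}(p)+\sqrt{1-\|\nabla\mathcal{R}(p)\|^2}\,\boldsymbol{n}_p .
\]

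It then remains to identify this vector with the stated rotation. Put $\hat g\coloneqq\nabla\mathcal{R}(p)/\|\nabla\mathcal{R}(p)\|$, $\boldsymbol{k}\coloneqq\hat g\times\boldsymbol{n}_p$ and $\alpha\coloneqq\arcsin\|\nabla\mathcal{R}(p)\|$, so $\sin\alpha=\|\nabla\mathcal{R}(p)\|$ and $\cos\alpha=\sqrt{1-\|\nabla\mathcal{R}(p)\|^2}$. Since $\hat g\perp\boldsymbol{n}_p$ we have $\boldsymbol{k}\perp\boldsymbol{n}_p$, so Rodrigues' rotation formula collapses to $R_{\boldsymbol{k},\alpha}\,\boldsymbol{n}_p=\cos\alpha\,\boldsymbol{n}_p+\sin\alpha\,(\boldsymbol{k}\times\boldsymbol{n}_p)$, and the vector triple product gives $\boldsymbol{k}\times\boldsymbol{n}_p=(\hat g\times\boldsymbol{n}_p)\times\boldsymbol{n}_p=-\hat g$. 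Hence $R_{\boldsymbol{k},\alpha}\,\boldsymbol{n}_p=\sqrt{1-\|\nabla\mathcal{R}(p)\|^2}\,\boldsymbol{n}_p-\|\nabla\mathcal{R}(p)\|\,\hat g$, which is exactly the expression above. The degenerate case $\nabla\mathcal{R}(p)=0$ is handled separately: then $\alpha=0$, the rotation is the identity, and the formula returns $\boldsymbol{n}_p$, as it should.

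The main obstacle is not the algebra but the geometric justification underlying the first step: one must argue that the offset point $p_{\mathcal{R}}^{\text{off}}$ attached to $p$ is genuinely the characteristic point of the sphere at $p$ (so that $F=F_u=F_v=0$ holds there rather than merely $F=0$), and that the sign of $h$ selects outward versus inward layers consistently with the band picture of Section~\ref{sec:insight}. A secondary point needing care is the bookkeeping between the intrinsic surface gradient $\nabla\mathcal{R}$ and chart partials, which is exactly why I fix an orthonormal parametrising frame at $p$ at the outset. As a sanity check — and matching Figure~\ref{fig:variable_normal_deviate} — the same conclusion follows from a purely planar argument in $\mathrm{span}\{\boldsymbol{n}_p,\nabla\mathcal{R}(p)\}$: two infinitesimally close generating circles whose centres are $ds$ apart and whose radii differ by $\|\nabla\mathcal{R}(p)\|\,ds$ share a common tangent line whose normal deviates from $\boldsymbol{n}_p$ by $\arcsin\|\nabla\mathcal{R}(p)\|$, toward $-\nabla\mathcal{R}(p)$.
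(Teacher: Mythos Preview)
Your proposal is correct. The main route you take is genuinely different from the paper's: you derive $\boldsymbol{n}_{\mathcal{R}}^{\text{off}}(p)$ analytically from the characteristic system $F=F_u=F_v=0$ of the two-parameter sphere family, obtain the closed form $-\nabla\mathcal{R}(p)+\sqrt{1-\|\nabla\mathcal{R}(p)\|^2}\,\boldsymbol{n}_p$, and then identify it with the claimed rotation via Rodrigues' formula. The paper instead gives a short geometric argument: take a nearby point $q$, consider the right triangle with hypotenuse $\|pq\|$ and one leg $|\mathcal{R}(p)-\mathcal{R}(q)|$, and read off $\sin\alpha$ from the ratio in the limit. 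Your envelope derivation buys rigor and an explicit formula (and makes the role of the condition $\|\nabla\mathcal{R}\|\le 1$ transparent as nonnegativity of $h^2$); the paper's argument buys brevity and geometric intuition but leaves the passage to the limit and the identification of the rotation axis implicit. Amusingly, your closing ``sanity check'' paragraph---two infinitesimally close circles in $\mathrm{span}\{\boldsymbol{n}_p,\nabla\mathcal{R}(p)\}$ with common tangent---\emph{is} the paper's entire proof.
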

\begin{proof}
As illustrated in Figure~\ref{fig:variable_normal_deviate},  
let $p$ and $q$ be a pair of nearby points on the surface.  
Each point defines a sphere centered at itself,  
with radii $\mathcal{R}(p)$ and $\mathcal{R}(q)$, respectively.

We consider the right triangle highlighted in Figure~\ref{fig:variable_normal_deviate}.  
The hypotenuse has length~$\|pq\|$, and one leg has length~$|\mathcal{R}(p) - \mathcal{R}(q)|$.  
This leg lies in the direction of the gradient~$\nabla \mathcal{R}(p)$,  
while the other leg lies in the tangent plane at the offset point~$p_{\mathcal{R}}^{\text{off}}$.  
Given this geometric configuration, the angle~$\alpha$, and hence the displacement direction, can be computed directly.

\end{proof}


\begin{figure}[!t]
\centering
\includegraphics[width=0.99\linewidth]{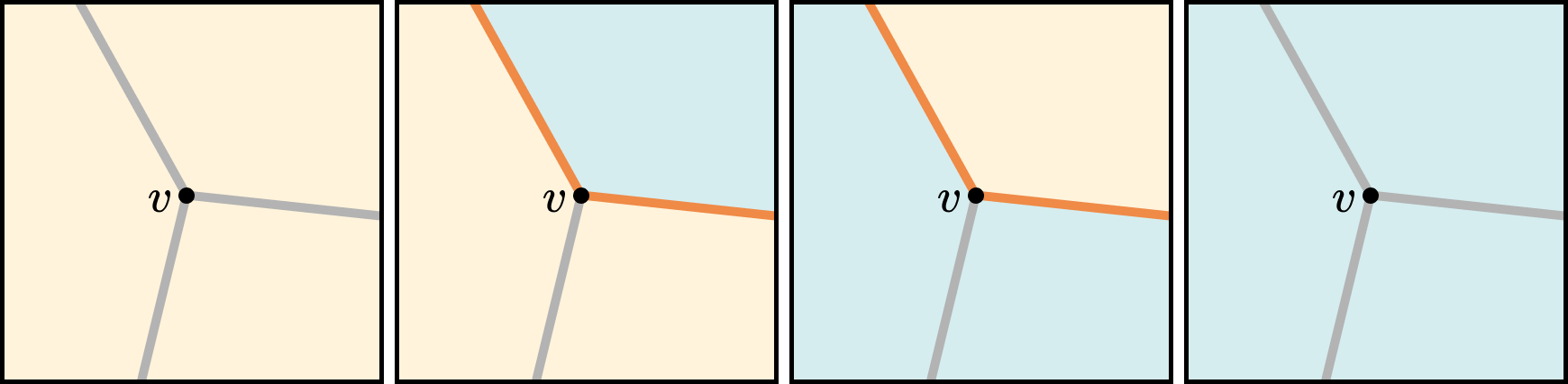}
\caption{
In a 2D power diagram, a vertex typically has a degree of three in non-degenerate cases. 
Each power cell is labeled as $C_p$ \textit{(blue)} or $C_{p^{\epsilon}}$ \textit{(yellow)}, and the offset surface \textit{(orange)} is extracted as the separator between $C_p$ and $C_{p^{\epsilon}}$. 
This figure displays all four labeling configurations. For any of the configurations, non-manifold artifacts cannot occur in the separator.
Note that the vertex does not contribute to the offset surface for the first and last configurations.}
\label{fig:vor_colored}
\end{figure}

\textbf{Manifold Output Without the Need for Explicit Self-Intersection Handling.}
Without loss of generality,  
we assume the power diagram does not exhibit degeneracies—specifically, no more than three sites determine a Voronoi vertex in 2D, and no more than four in 3D. This matches the behavior of the CGAL implementation, which guarantees a unique triangulation even under co-spherical degeneracies~\cite{cgal:triangulation}.

Figure~\ref{fig:vor_colored} shows four representative configurations,  
where the regions corresponding to base points and displaced points are visualized in different colors.  
It is straightforward to verify that non-manifoldness cannot occur by examining all possible configurations.

Furthermore,  
unlike traditional offsetting methods that require explicit handling of self-intersections,  
our approach avoids this complication entirely.  
Let $p$ be a sample point on the base surface that does not contribute to the final offset surface—that is, its corresponding offset point $p_{\mathcal{R}}^{\text{off}}$ lies strictly inside the envelope.  
We now explain why such a point cannot contribute to the crust,  
and thus requires no special treatment.

Since $p_{\mathcal{R}}^{\text{off}}$ lies within the interior of the envelope,  
there must exist another point $q$ whose associated ball contains $p_{\mathcal{R}}^{\text{off}}$.  
This implies:
\begin{equation}
    \|q - p_{\mathcal{R}}^{\text{off}}\| < \mathcal{R}(q),
\end{equation}
or, equivalently,
\begin{equation}
    \|q - p_{\mathcal{R}}^{\text{off}}\| - \mathcal{R}(q) < 0.
\end{equation}
Therefore, we further have
\begin{equation}
    \|p - p_{\mathcal{R}}^{\text{off}}\| - \mathcal{R}(p) = 0 > \|q - p_{\mathcal{R}}^{\text{off}}\| - \mathcal{R}(q),
\end{equation}
or 
\begin{equation}
    \|p - p_{\mathcal{R}}^{\text{off}}\|^2 - \mathcal{R}^2(p) = 0 > \|q - p_{\mathcal{R}}^{\text{off}}\|^2 - \mathcal{R}^2(q),
\end{equation}
or 
\begin{equation}
d_{p}^\text{pow}(p_{\mathcal{R}}^{\text{off}})>d_{q}^\text{pow}(p_{\mathcal{R}}^{\text{off}}),
\end{equation}
which shows that $p_{\mathcal{R}}^{\text{off}}$ must reside in the power cell of $q$  
(or of another, more dominant site).  
To summarize, the point $p$ must be located in the interior of the envelope and thus does not contribute to the crust.

\begin{figure}[!tbp]
\centering
\includegraphics[width=.7\linewidth]{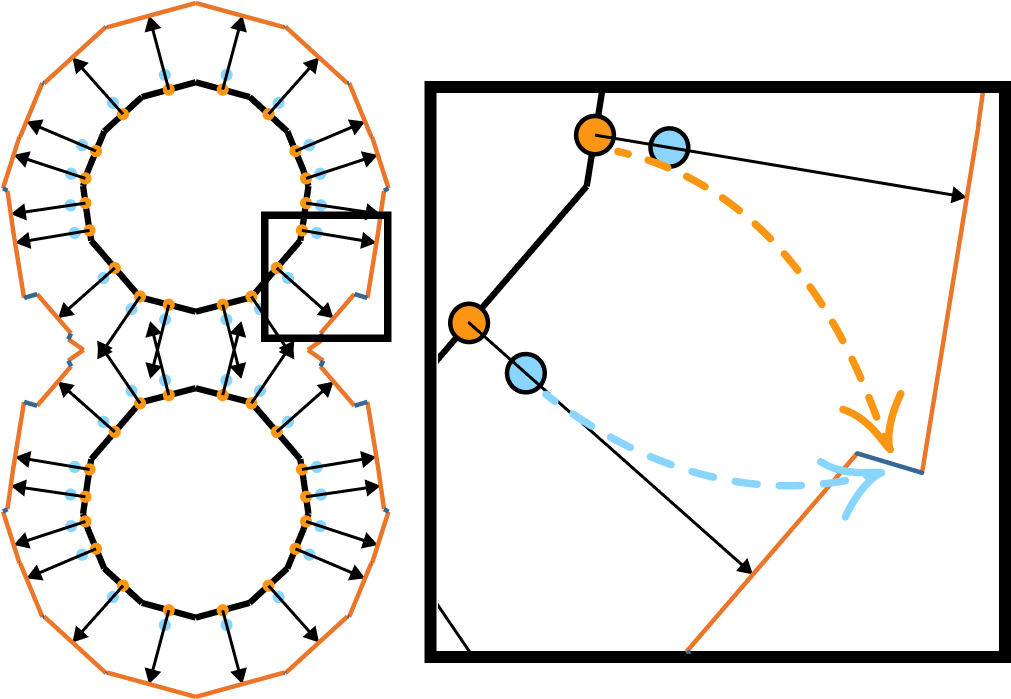}
\caption{
Crust-based approaches are prone to misalignment issues,  
where the misaligned segment (shown in blue) is formed by a base point  
and the displaced point of a different base point.
}
\label{fig:offset 2d_crust_and_smooth}
\end{figure}

\textbf{Misaligned Facets.}
As shown in Figure~\ref{fig:offset 2d_crust_and_smooth},  
crust-based approaches are prone to misalignment issues.  
Suppose $p$ and $q$ are a pair of nearby points.  
Point $p$ (resp. $q$) induces a facet with surface normal~$\boldsymbol{n}_{\mathcal{R}}^{\text{off}}(p)$ (resp. $\boldsymbol{n}_{\mathcal{R}}^{\text{off}}(q)$)  
at the offset point~$p_{\mathcal{R}}^{\text{off}}$ (resp. $q_{\mathcal{R}}^{\text{off}}$).  
However, a gap may exist between the two facets,  
often filled by the bisector defined by either the pair $(p, q_{\mathcal{R}}^{\epsilon})$  
or the pair $(q, p_{\mathcal{R}}^{\epsilon})$.

VoroCrust~\cite{abdelkader2020vorocrust, abdelkader2018sampling} addresses such misalignments  
by carefully positioning the sample points.  
However, it is limited to representing the base surface itself.  
In our scenario, we instead use a power diagram to represent the offset surface.  
While it is challenging to design a “perfect” sampling strategy  
that eliminates all misalignments,  
we propose a refinement technique (Section~\ref{sec:vertex_refinement})  
to effectively suppress them.  
Further discussion is provided in Appendix A.

\textbf{Requirements on $\mathcal{R}$.}
The user-specified radius function~$\mathcal{R}$ must satisfy certain requirements.  
The key condition is that~$\|\nabla \mathcal{R}(p)\| \leq 1$,  
as required by Theorem~\ref{thm:Displacement},  
where only values satisfying this constraint yield a valid angle~$\alpha = \arcsin(\|\nabla \mathcal{R}(p)\|)$.  
This condition is also consistent with reconstruction from the medial axis transform (MAT).  
If~$\|\nabla \mathcal{R}(p)\| > 1$,  
then in the neighborhood of~$p$, there exists another point~$q$ such that the medial ball of~$p$ contains that of~$q$,  
or vice versa—contradicting the MAT property of maximal, non-nested balls.

When the input is a triangle mesh,  
the radius function~$\mathcal{R}$ is typically defined at the vertices.  
We assume that~$\mathcal{R}$ is a linear scalar field within each triangle.  
In practice, users may provide values at only a sparse subset of vertices.  
The remaining values can then be smoothly interpolated by solving the biharmonic equation \( \Delta^2 \mathcal{R} = 0 \) over the mesh surface. Figure~\ref{fig:pyramid_geodesic} shows the interpolated distance fields on the mesh and the corresponding offset surfaces.

\begin{figure}[!t]
    \centering
\includegraphics[width=0.96\linewidth]{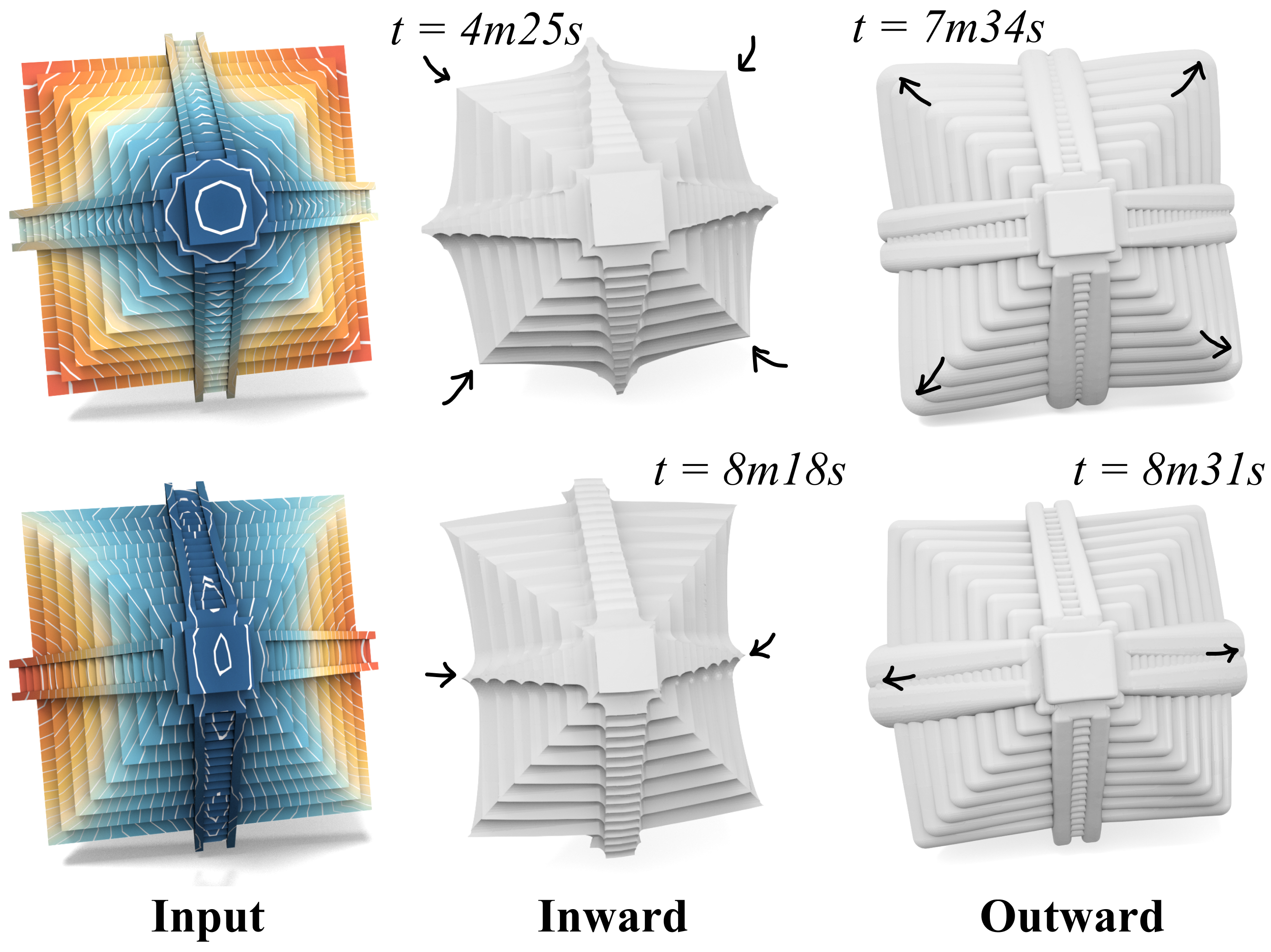}
    \caption{
    Larger radii are specified at four positions on the \textit{top} and two on the \textit{bottom}, and the  distance fields are interpolated over the mesh to produce the final deformations.}
\label{fig:pyramid_geodesic}
\vspace{-5mm}
\end{figure}



\section{Implementation}
\label{sec:sample_methods}

\subsection{Sampling}
In the previous section,  
we assumed that surface normals undergo a smooth transition.  
However, when the input is a triangle mesh,  
surface normals may exhibit abrupt changes across sharp feature lines.  
Therefore, it is necessary to carefully address the sampling problem.

In general, base sample points can be categorized into three types:  
triangle-interior points, edge-type points, and vertex-type points.  
Triangle-interior points are generated using a blue-noise sampling process to mitigate the influence of the input mesh triangulation and better preserve the underlying shape.  
For each triangle-interior point, we generate a unique displaced point based on Theorem~\ref{thm:Displacement}.  
In contrast, for edge-type and vertex-type points,  
we adopt a ``one base point, multiple displaced points'' (1vN) strategy.

\begin{wrapfigure}{r}{0.33\linewidth}
\vspace{1mm}
\includegraphics[width=.99\linewidth]{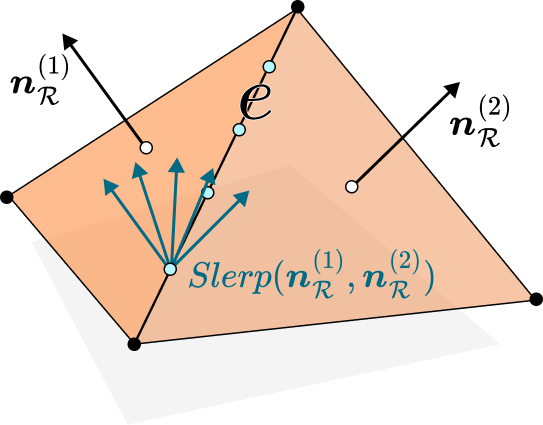}
\vspace{-8mm}
\label{fig:slerp-manifold-edge}
\end{wrapfigure}
\textbf{Edge-Type Base Points.} For the sake of algorithmic generality, 
we do not explicitly detect sharp feature lines.  
Instead, all mesh edges and vertices are treated uniformly.  
The inset figure illustrates the 1vN strategy for generating base points and their corresponding displaced points.  
Consider a manifold edge~$e$ shared by two triangle faces~$f_1$ and~$f_2$.  
For a point on~$e$, the estimated displacement directions  
$\boldsymbol{n}_\mathcal{R}^{(1)}$ and $\boldsymbol{n}_\mathcal{R}^{(2)}$  
may differ depending on whether the estimation is performed within~$f_1$ or~$f_2$.  
Once these directions are computed,  
we generate multiple displacement directions  
by interpolating between them.  
This is efficiently implemented using spherical linear interpolation (Slerp).

\begin{wrapfigure}{r}{0.38\linewidth}
\includegraphics[width=.99\linewidth]{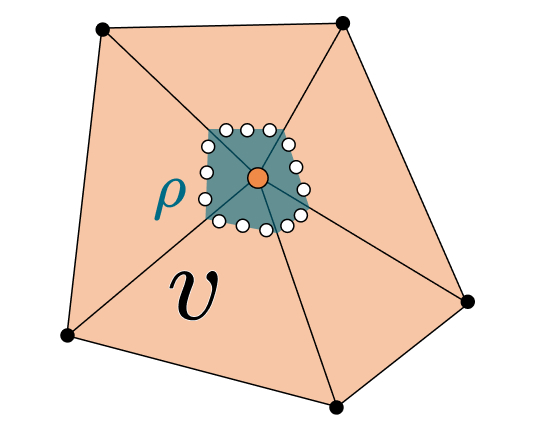}
\vspace{-3mm}
\label{fig:vertex_safe}
\end{wrapfigure}
\textbf{Vertex-Type Base Points.}
For each vertex~\(v\), we place a small sphere centered at~\(v\) 
such that it intersects the base surface along a closed curve (see the inset).
We sample points along this intersection curve and additionally 1vN sample the spherical surface for~\(v\).
Together, these samples form the vertex-type samples.
In practice, we place a discretized spherical sampling pattern centered at~\(v\) and directly sample points on it. Furthermore, we select a set of sample points in the vicinity of~\(v\) and adjust their distances to~\(v\) to match a prescribed radius.
Specifically, the radius of the small sphere is set to $\rho \cdot l$, where $l$ denotes the average edge length.
Any triangle-interior samples that lie entirely inside the sphere are removed, ensuring that displaced points generated from~\(v\) are assigned higher priority.

\textbf{Further Improvements.}
Although treating all edges uniformly and applying the 1vN strategy ensures generality,  
it may introduce unnecessary computational overhead, especially in planar regions.  
Therefore, one may optionally pre-detect sharp feature lines  
by applying a dihedral angle threshold,  
and restrict the use of the 1vN strategy to those regions only.

\begin{wrapfigure}{r}{0.38\linewidth}
\includegraphics[width=.99\linewidth]{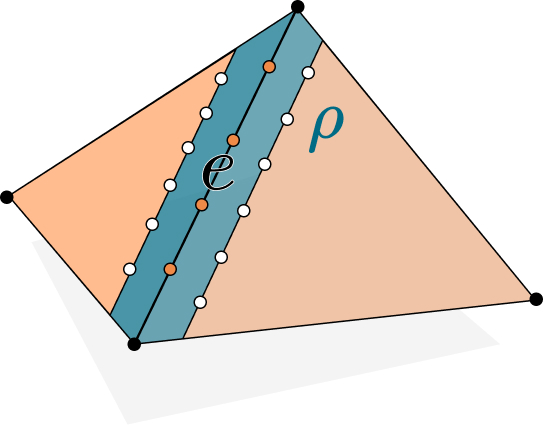}
\vspace{-3mm}
\label{fig:edge_safe}
\end{wrapfigure}
Furthermore, similar to the vertex-based strategy that uses a small enclosing sphere,  
we use a thin cylinder, with a radius of $\rho \cdot l$, to enclose each mesh edge for generating displaced points. 
As shown in the inset figure, the cylinder intersects the base surface along two straight lines.  
Base points are extracted from these lines,  
and corresponding displaced points are then generated.  
It is important to remove any base sample points that lie entirely inside the cylinder  
to ensure a clean, artifact-free offset surface.

In our implementation, the sampling strategy operates on triangle-level geometry and determines how the input mesh corresponds to the offset surfaces.
This gives rise to the following requirements for the input mesh:
\begin{itemize}
\item \textit{Free of self-intersections}: to avoid complicated sharp features introduced by self-intersections rather than by the actual mesh edges.
\item \textit{Good triangle quality}: to accurately detect feature lines and ensure that enclosing spheres and cylinders remain valid even at the triangle level, avoiding issues caused by long or thin triangles.
\item \textit{Watertightness} so that the displaced normal directions are well defined; otherwise, for open surfaces, each triangle must consider offsets on both sides (as in the Medial Axis Transform).
\end{itemize}
In Figure~\ref{fig:feature_detection},
we visualize the base sample points together with their displaced counterparts for both vertex-type and edge-type samples.
Our 1vN sampling strategy effectively handles complex sharp features and produces accurate offset surfaces.
An ablation study on how sampling density affects accuracy is provided in Appendix B.

\begin{figure}[!t]
    \centering
    \includegraphics[width=\linewidth]{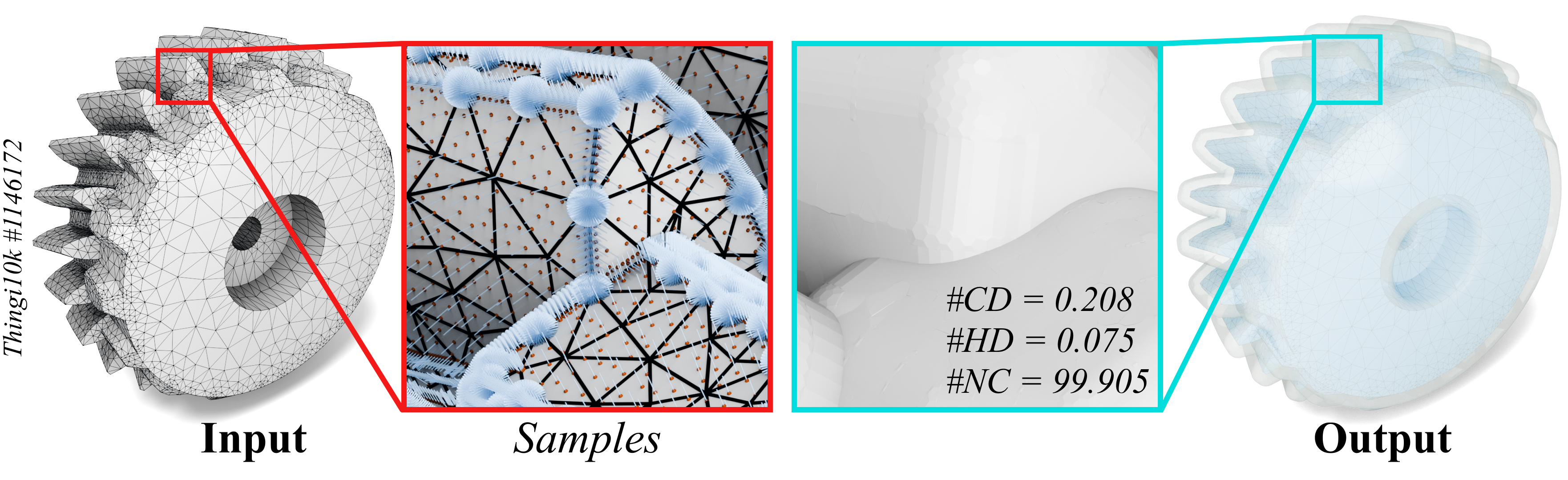}
\caption{
Our 1vN sampling strategy effectively handles complex sharp features, resulting in accurate offset surfaces. Note that Chamfer Distance (CD) is scaled by $10^4$, and both Hausdorff Distance (HD) and Normal Consistency (NC) are scaled by $10^2$.}
    \vspace{-2mm}
    \label{fig:feature_detection}
\end{figure}

\subsection{Misalignment Elimination}
\label{sec:vertex_refinement}

\begin{wrapfigure}{r}{0.52\linewidth}
\vspace{-2mm}
\includegraphics[width=.99\linewidth]{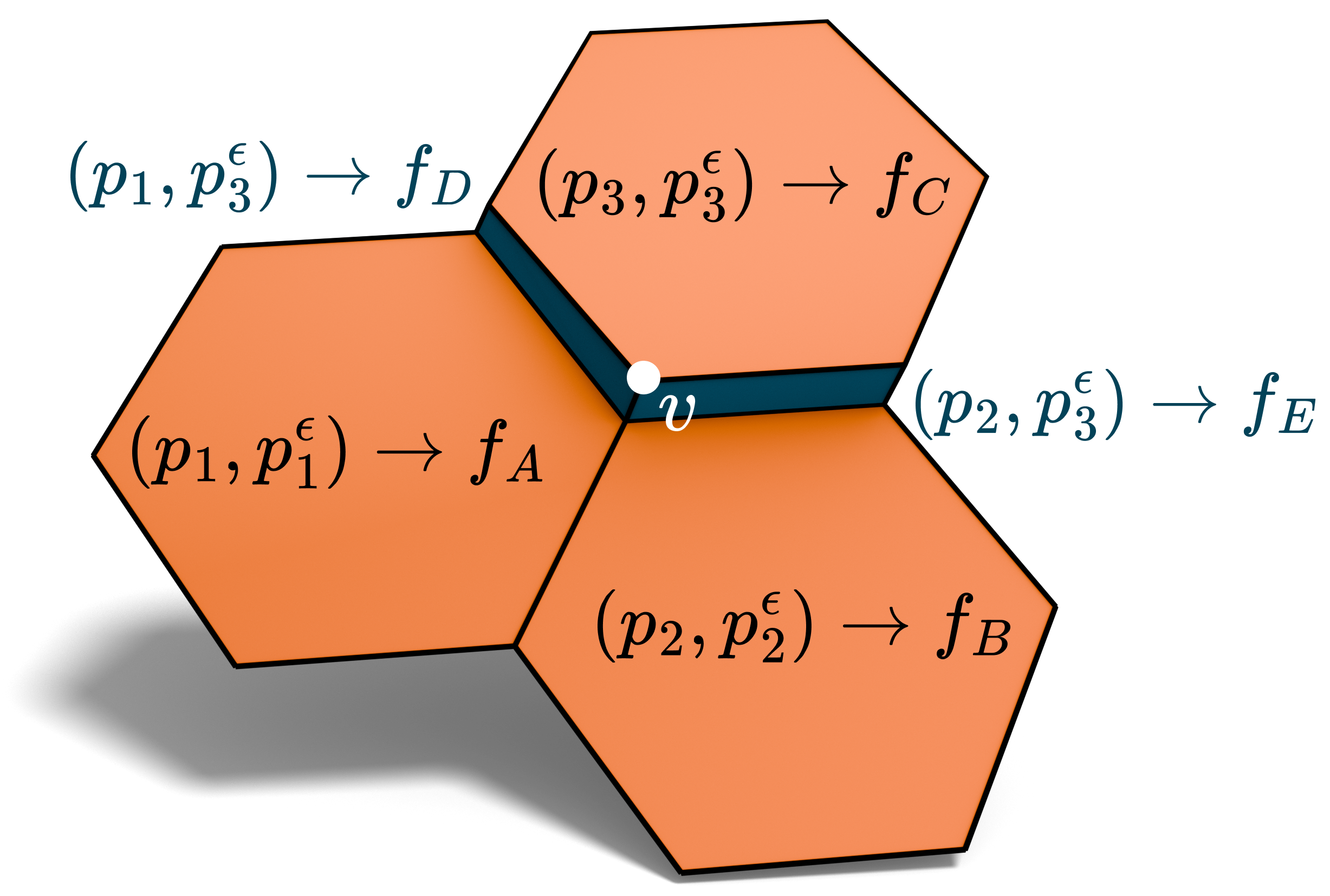}
\vspace{-2mm}
\label{fig:offset d}
\end{wrapfigure} 
As mentioned in the previous section,  
misaligned facets often arise in crust-based approaches.  
Consider the inset figure as an example: each facet is defined by one on-surface point and one off-surface point.  
The vertex~$v$ of the power diagram is incident to three facets: $f_C$, $f_D$, and $f_E$,  
where $f_C$ is formed by $p_3$ and its displaced point,  
$f_D$ by $p_1$ and the displaced point of $p_3$,  
and $f_E$ by $p_2$ and the displaced point of $p_3$.  
This example illustrates that misaligned facets result from competition between one base point  
and the displaced point of another base point.

Generally, each facet incident to the vertex~$v$ is defined by a pair of sites:  
one being a base point and the other a displaced point.  
The two sites may either be a matched pair (e.g., $p_3$ and its displaced point),  
or they may not be paired (e.g., $p_1$ and the displaced point of $p_3$).  
Let $\{p_i\}_{i=1}^k$ denote the set of all base points contributing to a vertex~$v$.  
Ideally, we want~$v$ to lie at a distance~$d_i := \mathcal{R}(p_i)$ from the tangent plane at each~$p_i$.  
Let $\boldsymbol{n}_i^{\text{off}} := \boldsymbol{n}_\mathcal{R}^{\text{off}}(p_i)$ denote the corresponding offset direction.

Formally, we aim to minimize the deviation of the expression:
\begin{equation}
    (v - p_i) \cdot \boldsymbol{n}_i^{\text{off}} - d_i,
\end{equation}
which should be as close to zero as possible.  
This leads naturally to the following least-squares optimization problem:
\begin{equation}
    \min_v \sum_{i=1}^k \left((v - p_i) \cdot \boldsymbol{n}_i^{\text{off}} - d_i\right)^2.
\end{equation}

It can be imagined that when the directions~$\{\boldsymbol{n}_i^{\text{off}}\}$  
are nearly identical, the solution may not be unique.  
This ambiguity can be resolved by encouraging~$v$ to remain close to its original position~$v_0$,  
which leads to a regularized formulation:
\begin{equation}
    \min_v \sum_{i=1}^k \left((v - p_i)\cdot \boldsymbol{n}_i^{\text{off}} - d_i\right)^2 + \lambda \|v - v_0\|^2,
\end{equation}
where $\lambda$ is a small positive constant, typically set to $0.01$.

As long as $\lambda \neq 0$, the solution~$v^*$ exists and is unique, and is given by:
\begin{equation}
    v^* = \boldsymbol{H}^{-1} \left( \lambda v_0 + \boldsymbol{b} \right),
\end{equation}
where
\begin{equation}
    \boldsymbol{H} = \sum_{i=1}^k \boldsymbol{n}_i^{\text{off}} (\boldsymbol{n}_i^{\text{off}})^T + \lambda I,
\end{equation}
and
\begin{equation}
    \boldsymbol{b} = \sum_{i=1}^k (p_i \cdot \boldsymbol{n}_i^{\text{off}} + d_i)\boldsymbol{n}_i^{\text{off}}.
\end{equation}
Similar to the Quadric Error Metrics (QEM) method~\cite{garland1997surface},  
the derivation can also be expressed using homogeneous coordinates.

\begin{figure}[!t]
    \centering
    \includegraphics[width=0.98\linewidth]{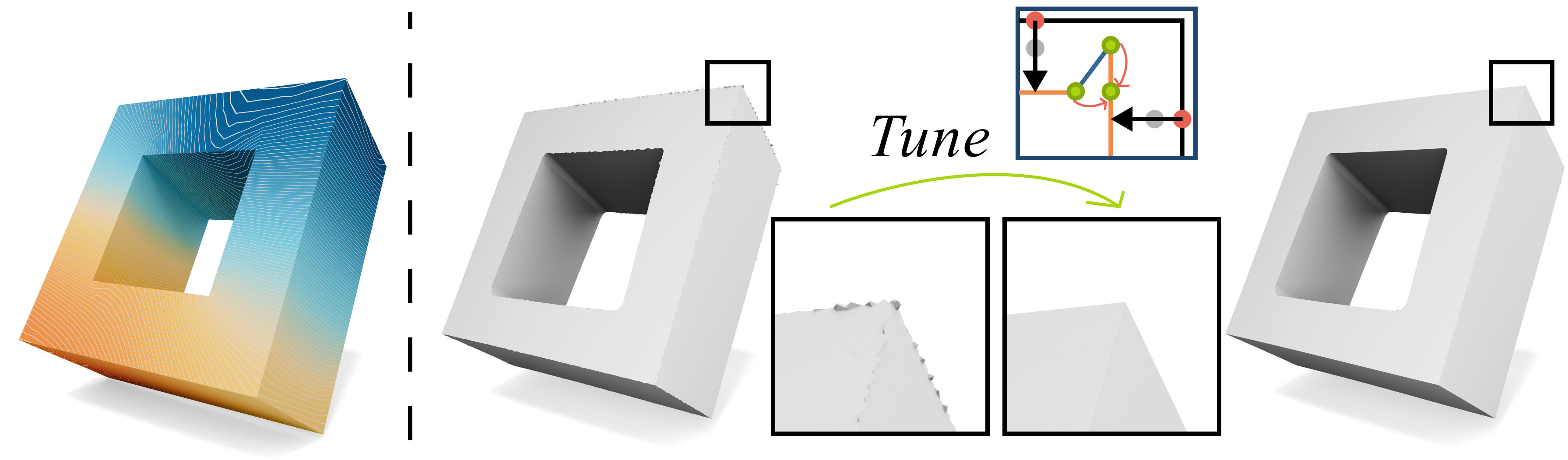}
\caption{
Inward offset of a cube with a square hole, visualized using color-coded varying offset distances.  
Our misalignment elimination strategy effectively optimizes vertex positions while preserving the sharp features of the offset surface.
}
    \vspace{-2mm}
    \label{fig:optim_inside}
\end{figure}
As illustrated in Figure~\ref{fig:optim_inside},  
the vertices are initially misaligned before refinement (see the close-up windows).  
Our misalignment elimination strategy effectively optimizes vertex positions, 
preserving the sharp features of the offset surface.  
However, it should be noted that triangle quality may degrade as a side effect of this optimization.  
Further discussion is provided in the Appendix D.

\section{Evaluation}
\label{sec:evaluation}

\textbf{Platform.}
We implemented our method in C++ using CGAL~\cite{cgal:eb-24a}, employing the \textit{Exact Predicates Exact Constructions Kernel} for robust power diagram computations, accelerated using Intel TBB~\cite{pheatt2008intel}. 
For the misalignment elimination step,  we use Eigen’s LDLT solver~\cite{eigenweb}, parallelized with OpenMP~\cite{dagum1998openmp}. 
We also leverage AABB trees, PQP~\cite{larsen1999fast}, and libigl~\cite{libigl} to support distance and inside-outside queries. 
All experiments were conducted on a machine with an Intel i9-13900K CPU and 64~GB of RAM, running Windows~11.

\textbf{Parameter Setting.}
For triangle mesh inputs, we use blue noise sampling to select 70K points and set the parameters as follows: $\lambda = 0.01$, $\epsilon = 10^{-6}$, $\rho = 5\%$, and a discrete spherical surface with 642 vertices. 
We define a relative offset distance parameter $\delta$ with respect to the bounding box diagonal length $l_{\text{diag}}$, such that the absolute offset distance is given by $d = \delta \cdot l_{\text{diag}}$. 
All input triangle meshes are preprocessed using TetWild~\cite{hu2018tetrahedral, ftetwild} with the edge length parameter set to $l = 0.5$. We denote results without misalignment elimination as ``Ours”, and those with this step as ``Ours+”.


\textbf{Evaluation Metrics.} 
To quantitatively evaluate our results, we adopt three standard reconstruction metrics: \textit{Chamfer Distance} (CD), \textit{Hausdorff Distance} (HD), and \textit{Normal Consistency} (NC). 
CD measures the average squared nearest-neighbor distance, HD captures the maximum squared nearest-neighbor distance, and NC evaluates the alignment of surface normals, computed as the average absolute dot product between corresponding normal vectors. 

For reconstruction-related evaluations, we adopt the standard two-sided formulation, whereas for constant-radius offsets we use the one-sided version that samples the offset surface and projects it onto the original surface.

Specifically for constant offsets, we sample 100K points from the offset surface and compute their projection points onto the original surface, then calculate the projection distances $\mathbf{D}(x)$. 
Accuracy is measured using the normalized distance metric $\left|\mathbf{D}(x) - d\right|$, where $d = \delta \cdot l_{\text{diag}}$.

\subsection{Evaluation on Variable-Radius Offsets}

For variable-radius offsets, the displacement direction may not align with the projection direction, making accurate evaluation challenging. We perform quantitative evaluation using analytical validation, and qualitative evaluation and comparison on triangle mesh inputs.

\begin{figure*}
    \centering
    \includegraphics[width=0.99\linewidth]{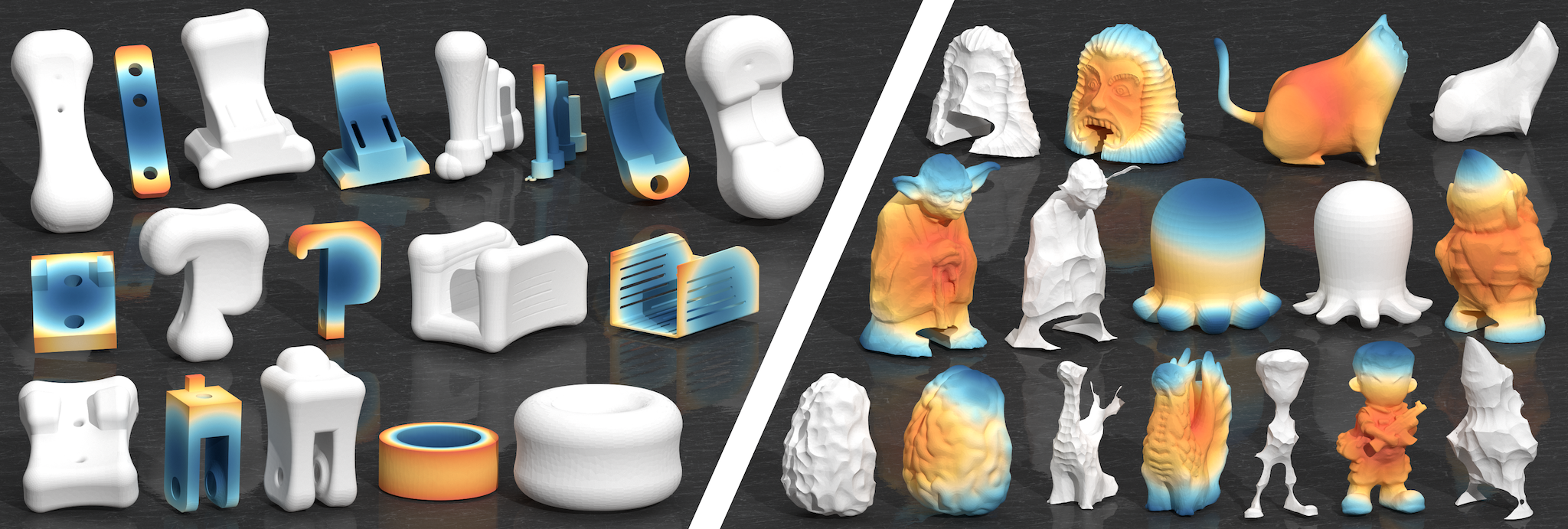}
    \caption{
    A gallery of variable-radius offset surfaces computed by our \textit{OffsetCrust}. 
    The offset distances (visualized using different colors) are generated using the quadratic function $\frac{1}{15}\|\boldsymbol{v}\|^2 + 0.01$, with models normalized to the cube $[-1, 1]^3$. 
    Left: CAD models from the ABC dataset~\cite{Koch_2019_CVPR} and their outward offsets. 
    Right: Freeform models from Thingi10K~\cite{zhou2016thingi10k} and their inward offsets.
    }
    \label{fig:gallery}
\end{figure*}

\begin{figure}[!t]
    \centering
    \includegraphics[width=\linewidth]{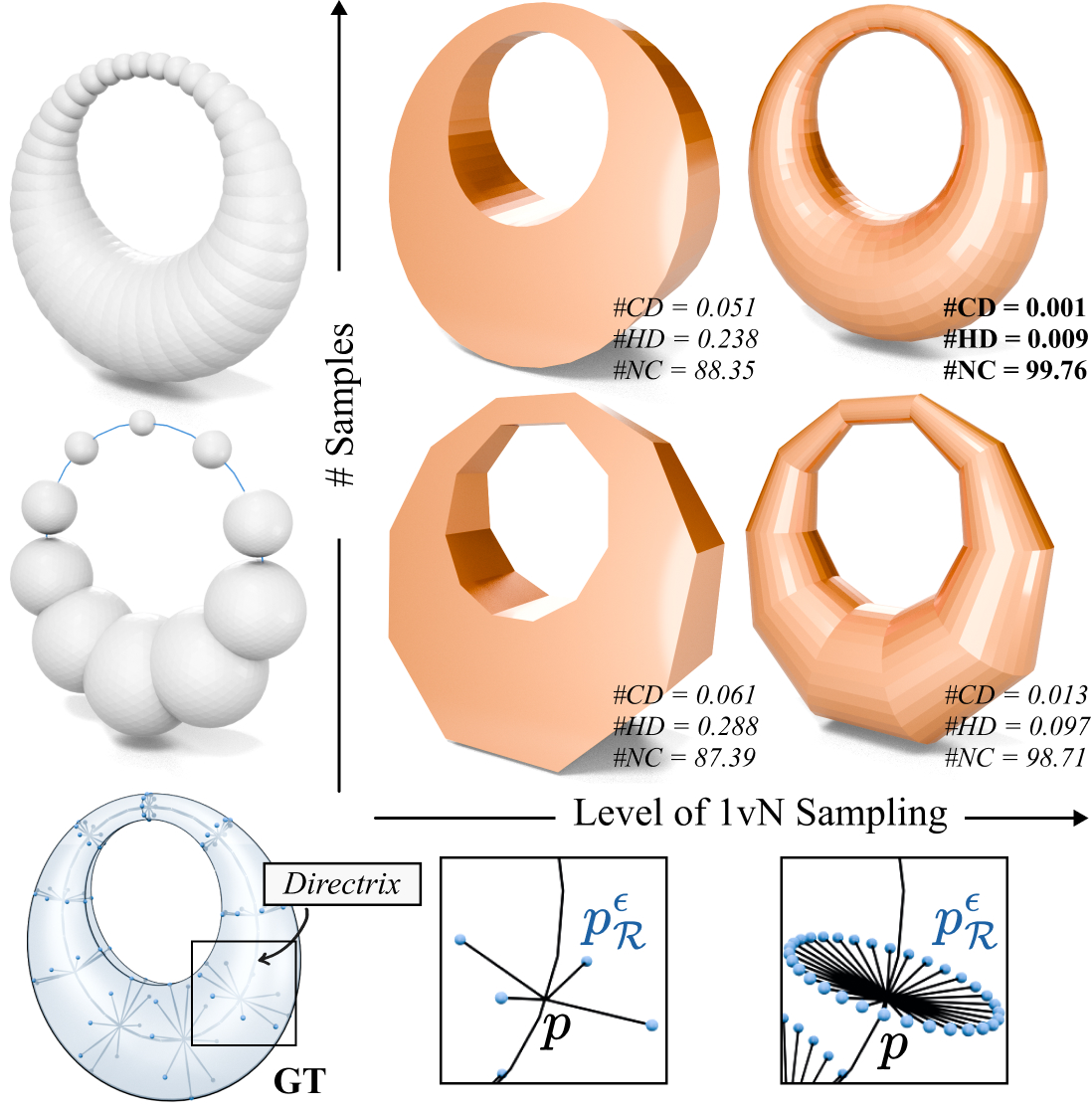}
\caption{
Construction of a Dupin cyclide surface using 1vN sampling along its directrix. Increasing the sampling density yields finer discretization and improves geometric reconstruction accuracy. Moreover, even with a finite number of samples, the results remain topologically correct and preserve piecewise-linear features.
}
    \vspace{-2mm}
    \label{fig:cyclide-mat-1vn}
\end{figure}

\subsubsection{Analytical Validation} 
Handling self-intersections in parametric surface offsets is nontrivial. Therefore, we select the \emph{Dupin cyclide}, a classic shape particularly well-suited for modeling variable-radius offsets.

By definition, the cyclide can be constructed from a parametric curve (called the \textit{directrix}) and a family of spheres with varying radii. To evaluate our method, we sample points along the curve with the corresponding radius to generate the surface, and then compute reconstruction metrics against the analytical ground truth.
Figure~\ref{fig:cyclide-mat-1vn} shows that increasing the sampling density along the curve and using higher-resolution Slerp-based 1vN samples improves surface reconstruction accuracy.
Even with a finite number of samples, our method produces topologically correct surfaces without fractures and oscillation artifacts.




\subsubsection{Comparisons on Challenging Cases}

\begin{figure*}[!t]
    \centering
\includegraphics[width=0.98\linewidth]{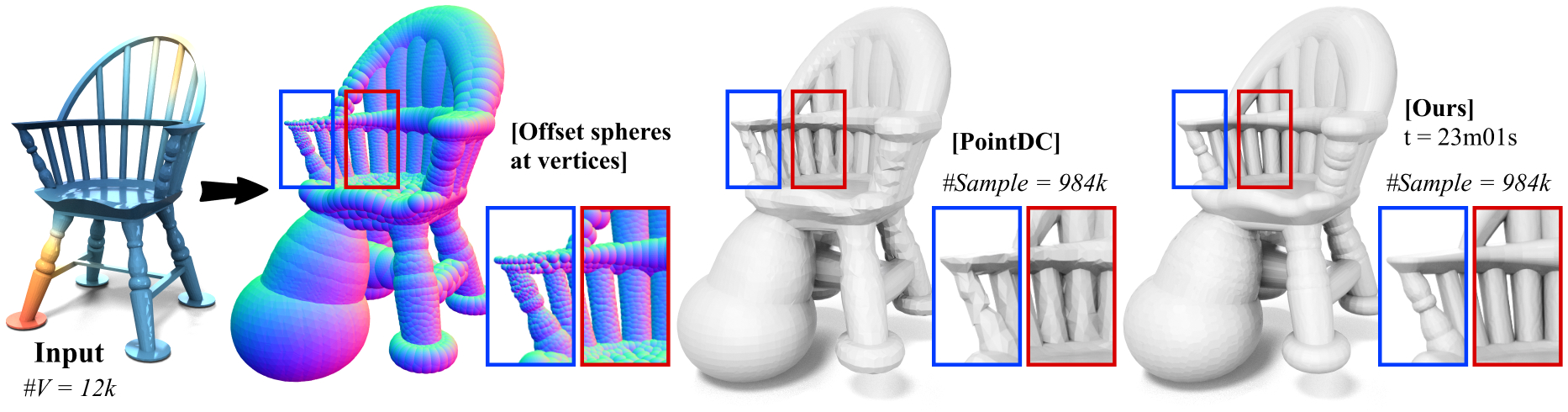}
    \caption{
Offsetting the chair with offset radii defined at the vertices. Two types of features are compared: detailed structures~\textit{(blue)} and narrow gaps~\textit{(red)}. Using only offset spheres centered at the vertices, the reconstruction exhibits oscillation artifacts. 
With the same sample points, \textit{PointDC} produces suboptimal results with prominent artifacts, whereas our method achieves superior quality.
    }
    \label{fig:variable_chair}
\end{figure*}

\begin{figure*}[!t]
    \centering
    \includegraphics[width=.98\linewidth]{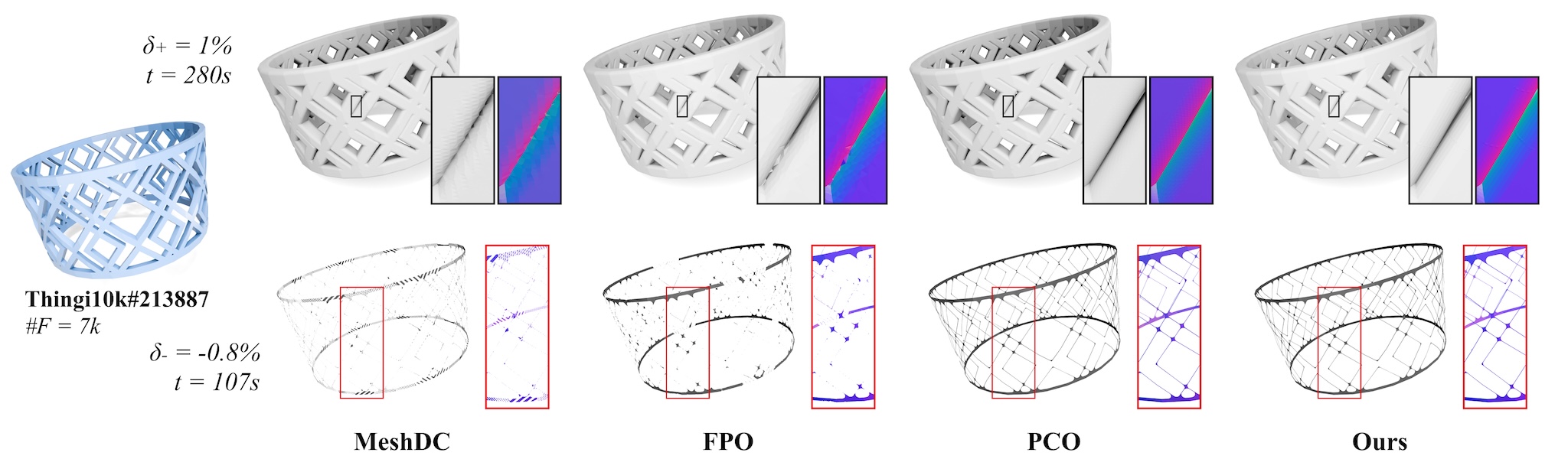}
    \caption{
    Qualitative comparisons of constant-radius offsets. In the outward offset \textit{(top)}, nearby gaps collapse in other methods, whereas both \textit{PCO} and our method preserve the narrow gaps. In the inward offset \textit{(bottom)}, the thin tubes are preserved by \textit{PCO} and our method, while other methods produce disconnected tubes.
    }
    \label{fig:213887}
\end{figure*}

For triangle mesh inputs, we compute variable-radius offsets on both CAD models and freeform models. Figure~\ref{fig:gallery} presents the results of our method, qualitatively demonstrating its effectiveness across different types of triangle meshes.
In particular, narrow gaps and thin tubes represent typical challenging cases in offset computation. We perform comparisons under comparable computational budgets to highlight the advantages of our method in these scenarios.

\textbf{Comparison Methods.}
For variable-radius offset comparisons, we use \textit{PointDC}, which constructs the variable offset distance field directly from Eq.~\ref{eq:offset_distance_field} by computing distances from grid points to the sample points. The offset surface is then extracted using dual contouring (DC)~\cite{ju2002dual}, which preserves sharp features.

For constant-radius offset comparisons, we use \textit{MeshDC}, which constructs the distance field directly from the input mesh to avoid oscillation artifacts, and extracts the offset surface using DC. In addition, we compare against the state-of-the-art sharp-feature-preserving methods: \textit{FPO}~\cite{zint2023feature}, which is based on DC and remeshing techniques, and \textit{PCO}~\cite{wang2024pco}, which relies on tetrahedralization and linear approximation.

\textbf{Narrow Gaps.}
Figure~\ref{fig:variable_chair} shows the outward variable-radius offset of a chair model containing many tubular structures. Due to the incorrect nearest-neighbor searches in distance field computing, \textit{PointDC} fails at narrow gaps. In contrast, our method maintains the correctness of the narrow gaps.
Figure~\ref{fig:213887}~\textit{(top)} compares the outward constant-radius offsets. Both \textit{MeshDC} and \textit{FPO} suffer from gap collapse, whereas \textit{PCO} and our method preserve the narrow gaps.

\textbf{Thin Tubes.}  
Figure~\ref{fig:213887}~\textit{(bottom)} shows the inward constant-radius offsets of the tubes. 
Due to the thin geometry and inaccuracies in the distance field, both \textit{MeshDC} and \textit{FPO} fail to preserve these thin structures, whereas \textit{PCO} and our method successfully maintain them.

\subsection{Robustness and Performance}
\label{sec:eval_robust}

\begin{figure}[!t]
    \centering
    \includegraphics[width=0.98\linewidth]{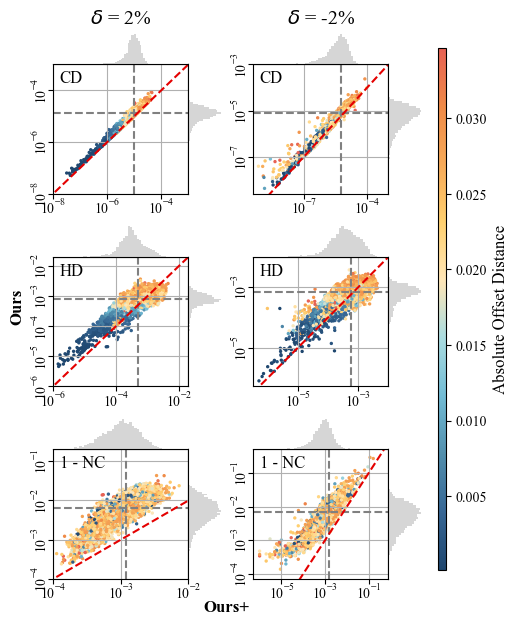}
    \caption{
Quantitative results for constant-radius offsets with $\delta = \pm 2\%$, evaluated on $7,609$ models from Thingi10K, illustrating accuracy and robustness. Colors indicate the absolute offset distance. The gray dashed line marks the average value, while the red line denotes the identity line ($y = x$). Points above the red line correspond to improvements after misalignment elimination.
    }
    \label{fig:thingi10k_res}
\end{figure}

\begin{figure}[!t]
\centering
\includegraphics[width=1\linewidth]{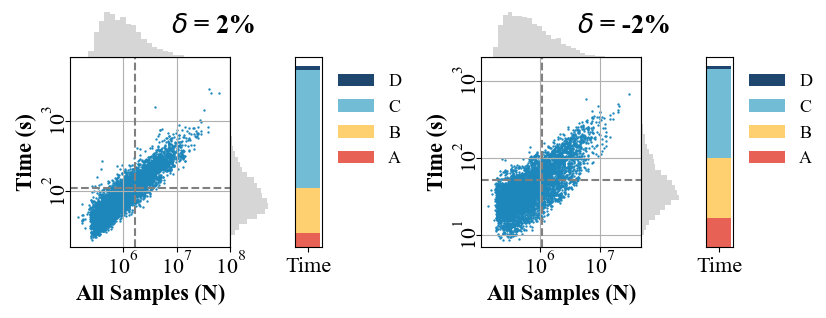}
\caption{
Runtime performance of constant-radius offsets with $\delta=\pm 2\%$ tested on $7,609$ models from Thingi10K. The overall runtime is shown on the \textit{left}, while the time breakdown proportions are illustrated on the \textit{right}.
}
\label{fig:thingi10k_time}
\end{figure}

\begin{figure}[!t]
    \centering
    \includegraphics[width=0.99\linewidth]{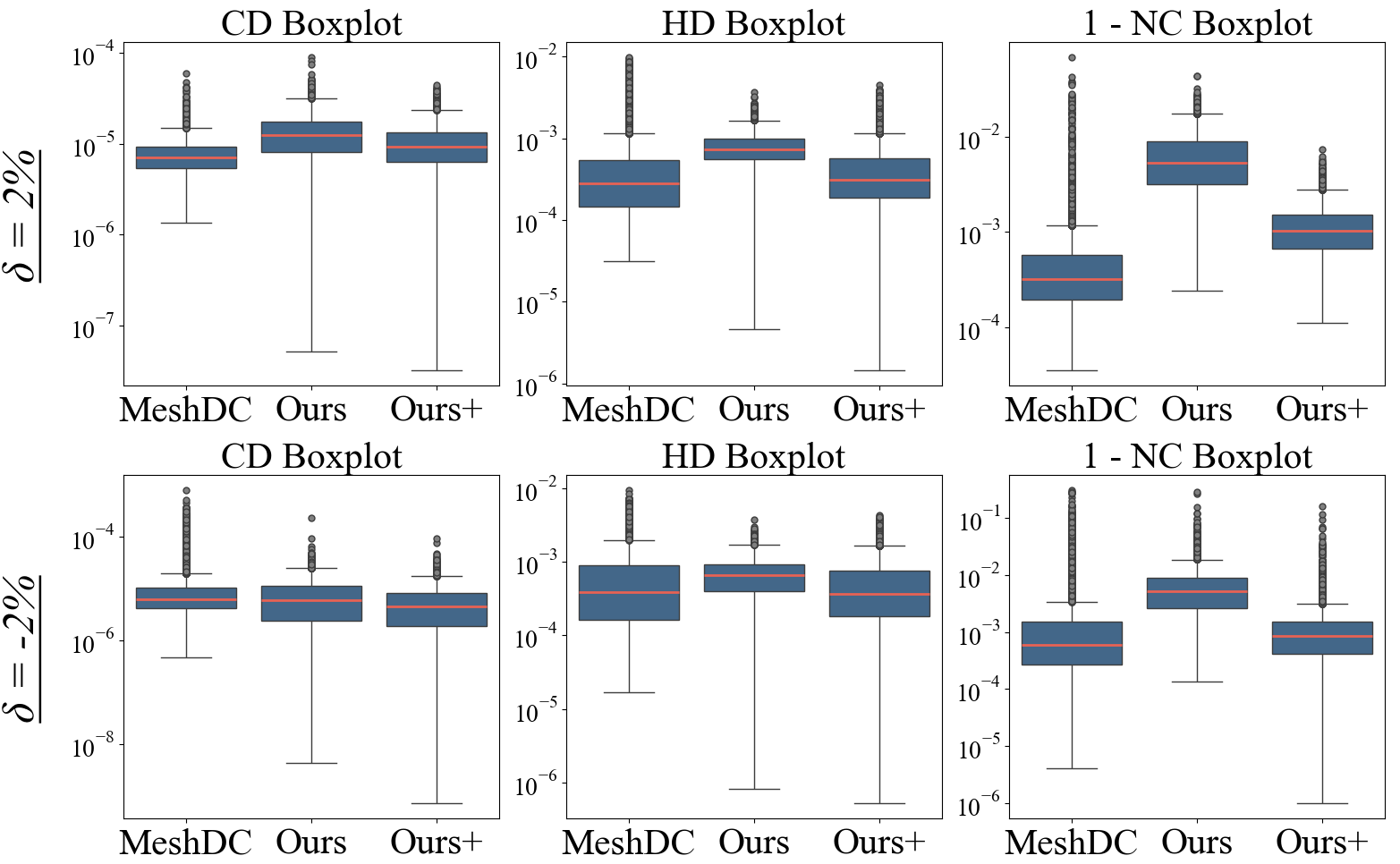}
    \caption{
    Boxplots evaluating \textit{MeshDC} and our method on $7,609$ models from Thingi10K with constant-radius offsets $\delta=\pm 2\%$. The y-axis shows the values of the metrics indicated in the figure title.
    }
    \label{fig:comparison_with_dc}
\end{figure}

To ensure both quality and diversity of the input meshes, we conducted experiments on the Thingi10K dataset~\cite{zhou2016thingi10k} preprocessed by TetWild~\cite{hu2018tetrahedral}. 
We retain $7,609$ valid preprocessed triangle meshes that are watertight, free of self-intersections, and geometrically consistent with their original meshes. All meshes are normalized to the unit cube $[0,1]^3$.
Using default settings, we evaluated our method on relative constant offset distances $\delta = \pm 2\%$. This large-scale evaluation further supports the effectiveness of our default settings across mesh inputs.

\textbf{Quantitative Results and Optimization Impact.}
The quantitative results are presented in Figure~\ref{fig:thingi10k_res}. With relative distances $\delta = \pm 2\%$, various actual offset distances $d$ were tested, consistently demonstrating reasonable accuracy and stability. Before optimization, the method performed well on the distance metrics, CD and HD, indicating that the generated shape closely aligns with the definition of an offset. After optimization, the averages and the distribution of CD and HD show minimal changes, while the NC metric exhibits significant improvement, improving by approximately an order of magnitude.
In general, the initial offset surface achieves reasonable accuracy, but its quality improves significantly after optimization.

\textbf{Performance.}
Figure~\ref{fig:thingi10k_time} shows the runtime performance. The runtime increases as more points are included in the power diagram computation. The average computation time is approximately 100 seconds, with an average of $10^6$ samples. Specifically, our method comprises four major steps: sampling (A), power diagram computation (B), face extraction and adjacency information retrieval (C), and optimization (D). Figure~\ref{fig:thingi10k_time} (right) illustrates the time breakdown for each step.

Specifically, steps A and D consume only a small fraction of the total runtime. Apart from step B, step C is particularly time-consuming. This step is essential for preparing the data for optimization, specifically extracting the polygonal facets and the adjacency information for each vertex within the facets. 
If there is polygonal facet $f$, with each facet having $|v_f|$ vertices, the time complexity for this step is $O\left(\sum_{f} |v_f| \right)$. 
The more remaining facets, the more time is required, which is why inward offsets generally take less time on step C than outward offsets.

\textbf{Overall Quality Assessment}
We use \textit{MeshDC} at a resolution of $300^3$ as a reference and evaluate all valid models with offset distances $\delta = \pm 2\%$. Under these settings, \textit{MeshDC} generates an average of $498K$ faces, while our method generates $511K$ faces, ensuring comparable in mesh resolution.

Figure~\ref{fig:comparison_with_dc} presents boxplots comparing the performance metrics of DC and our method (with and without misalignment elimination). 
For CD and HD, our method is comparable to DC in median and interquartile range. Notably, our method achieves smaller minimum and maximum values, and the range of outliers is also smaller than DC’s, indicating better accuracy and more stable performance.
For NC, although our method improves significantly after eliminating misalignments, it performs better than DC on inward offsets, while the outward offsets do not outperform DC. However, since these differences are within an acceptable range and our method exhibits fewer outliers, the results remain favorable.

\section{Applications}
\label{sec:applications}

In this section, we demonstrate the applicability of \textit{OffsetCrust} across a variety of modeling tasks, including expressive surface design, channel surface modeling, and MAT-based surface reconstruction.

\subsection{Expressive Surface Design} 
\begin{figure*}[!t]
    \centering
    \includegraphics[width=\linewidth]{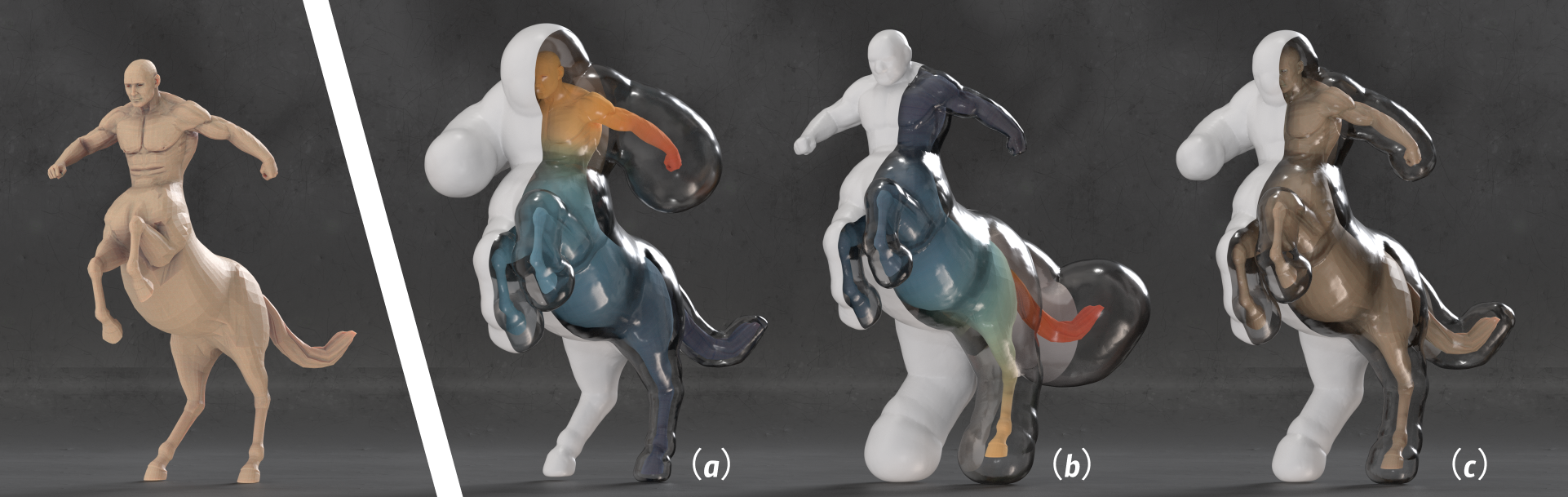}
  \vspace{-5mm}
  \caption{By assigning varying offset distances (visualized using different colors) to the vertices of the input Centaur model, our method generates a set of outward variable-radius offset surfaces: (a) the front part is enlarged significantly more than the rear, (b) the rear part is enlarged significantly more than the front, and (c) a constant-radius offset is applied uniformly across the model.
  }
  \label{fig:teaser}
\end{figure*}

Our method supports variable-radius offsets on triangle meshes, allowing offset distances to be specified per vertex or continuously across the surface. 
By assigning distances to selected vertices and interpolating them via the biharmonic equation ($\Delta^2 \mathcal{R} = 0$), our approach enables flexible and intuitive modeling of diverse, expressive surface variations, making it well-suited for customized freeform surface design. 
For example, as illustrated in Figure~\ref{fig:teaser}, we can compute different types of offset surfaces on the Centaur model to create a well-fitted armor.

\subsection{Offset-Based Channel Surface Modeling} 
\begin{figure*}[!t]
    \centering
    \includegraphics[width=\linewidth]{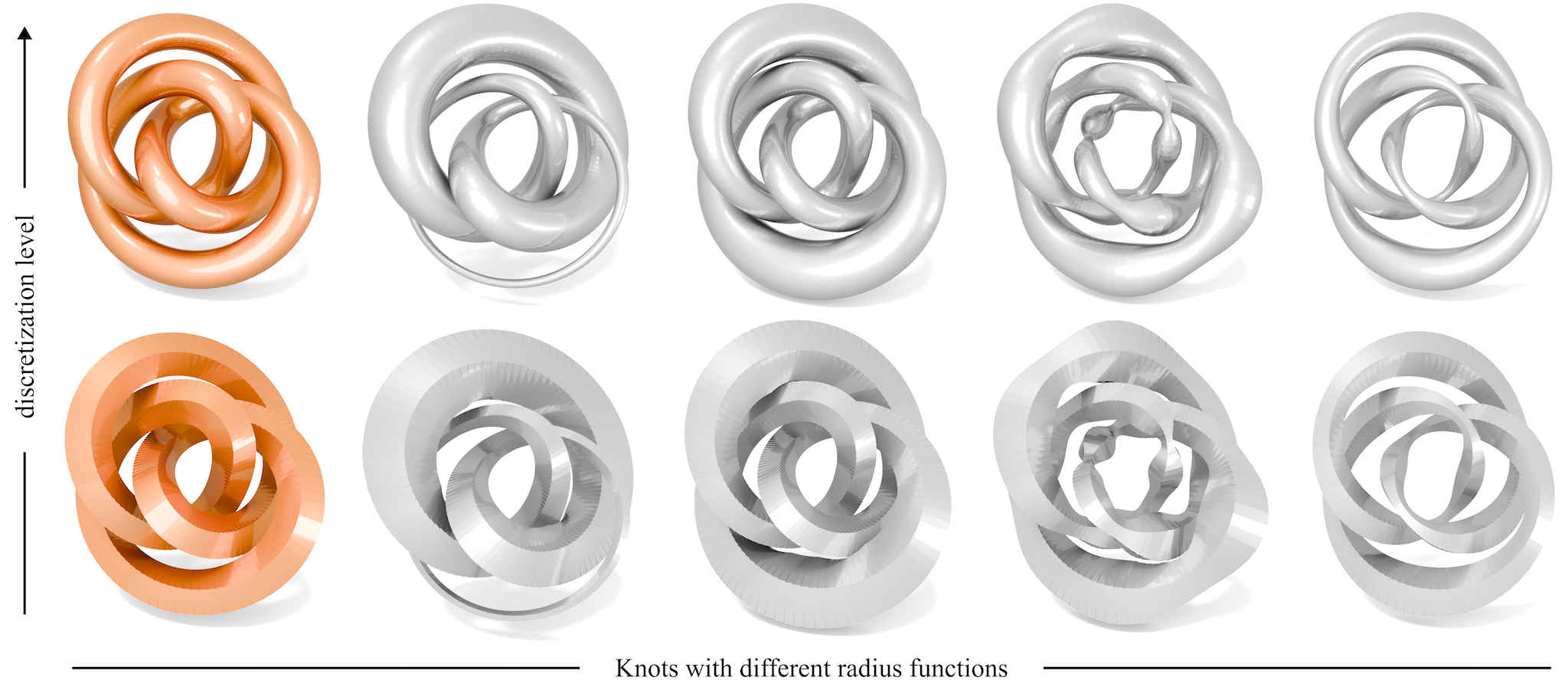}
\caption{
Construction of diverse knots from a parameterized central curve, defined as $x = (2 + \cos(2u)) \cos(3u)$, $y = (2 + \cos(2u)) \sin(3u)$, and $z = \sin(4u)$, with 300 sample points. 
From left to right, the radius functions vary as follows: 
(a) $\mathcal{R}(u) = 0.45$ (constant radius), 
(b) $\mathcal{R}(u) = 0.4 + 0.3\cos(u)$, 
(c) $\mathcal{R}(u) = 0.4 + 0.3 e^{-5 \sin^2(u)}$, 
(d) $\mathcal{R}(u) = 0.4 + 0.15 \cos(2u) + 0.1 \sin(13u + \cos(u)) + 0.05\sin(20u)$, 
(e) $\mathcal{R}(u) = 0.3 + 0.1\cos(2u) + 0.1\cos(8u)$. 
\textit{Top}: limiting displacement directions to 150 possibilities. 
\textit{Bottom}: limiting displacement directions to only 5 possibilities.
}

    \vspace{2mm}
    \label{fig:knot radii}
\end{figure*}

Channel surfaces are formed as the envelopes of spheres whose centers lie along a curve known as the \emph{directrix}~\cite{peternell1997computing}. Using our \emph{OffsetCrust} method, we begin with a user-specified curve and an associated smooth radius function. The method provides flexibility in defining non-uniform radius profiles, enabling the modeling of complex shapes.

In Figure~\ref{fig:knot radii}, we generate diverse knot models by varying the radius function while keeping the directrix fixed. Furthermore, we can intentionally reduce the number of displacement directions: the top row of Figure~\ref{fig:knot radii} limits displacement directions to 150, while the bottom row further reduces them to just 5, resulting in a faceted offset surface.

\begin{figure*}[!t]
    \centering
    \includegraphics[width=.99\linewidth]{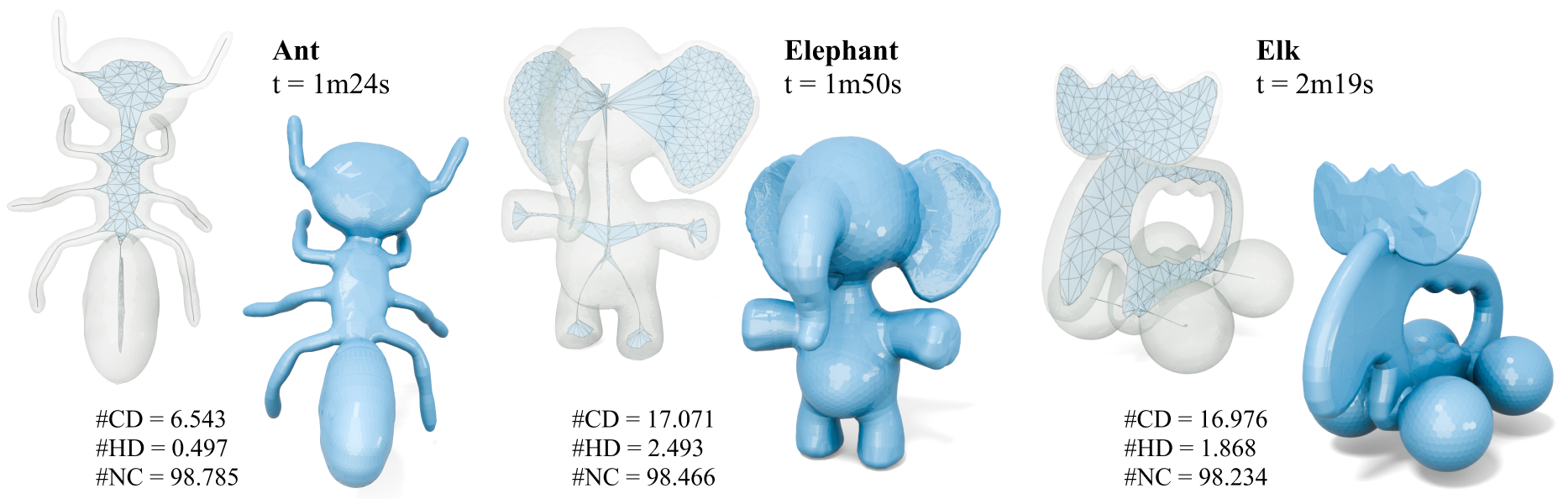}
\caption{
Surface reconstruction from medial axis transforms (MATs) generated using variable-radius offsets computed by our method. The MATs are produced by Q-MAT~\cite{li2015qmat}, and the reconstructed surfaces faithfully preserve the geometry of the original shapes. In these examples, we use 5K blue noise samples for on-surface sampling.
}
    \vspace{-2mm}
    \label{fig:mesh_mat}
\end{figure*}
\subsection{Surface Reconstruction from Medial Axis Transform} 

As a shape descriptor~\cite{wang2022computing, wang2024mattopo, li2015qmat}, the medial axis transform (MAT) offers a compact representation of a shape’s topology and structure. In the discrete setting, the MAT surface consists of vertices, edges, and faces, with an associated radius defined at each vertex.
Despite significant progress in computing MATs, recovering the original boundary surface from a given MAT remains a challenging problem.

In fact, this task can be formulated as a variable-radius offsetting problem, where the base surface is the MAT itself, and the radius function naturally satisfies $\|\nabla\mathcal{R}\| < 1$. Figure~\ref{fig:mesh_mat} presents surface reconstruction results, where the input MATs are generated using Q-MAT~\cite{li2015qmat}. 
We evaluate the reconstruction quality by computing the two-sided distance between the original boundary surface and the reconstructed surface. Based on the CD, HD, and NC scores shown in the figure, it is evident that our \emph{OffsetCrust} method accurately recovers the geometry of the original boundary surface.

\section{Discussion and Limitations}

In this paper, we presented \emph{OffsetCrust}, a crust-based method for explicitly computing variable-radius offset surfaces. Given the radius function~$\mathcal{R}$, our formulation leverages power diagrams to robustly approximate the offset geometry by carefully sampling base points and generating their corresponding off-surface points, displaced along~$\mathcal{R}$-dependent directions. This approach generalizes smoothly from the constant-radius case—where displacement directions align with surface normals—to more complex, variable-radius scenarios.

To address the misalignment issues commonly encountered in crust-based methods, we introduced a lightweight fine-tuning procedure that significantly improves geometric fidelity. Through extensive experiments, we validated the effectiveness and efficiency of our method and demonstrated its applicability in a variety of modeling tasks, including MAT-based surface reconstruction.
Despite these strengths, our method has certain limitations. 

\begin{figure}[!t]
    \centering
    \includegraphics[width=.94\linewidth]{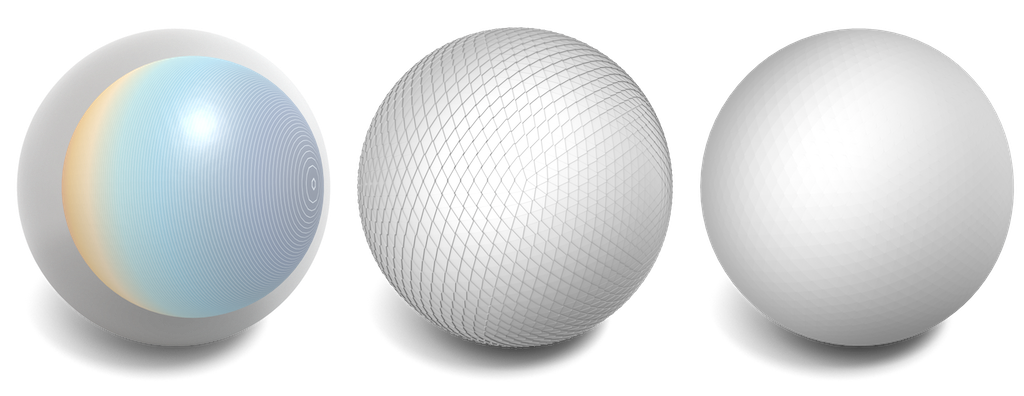}
    \caption{
    Variable-radius offset of a discrete sphere \textit{(left)}. Without rotating the normals, unresolvable large misalignments occur \textit{(middle)}. After applying the rotated normals by Theorem~\ref{thm:Displacement}, these misalignments are eliminated \textit{(right)}.
    }
    \label{fig:sphere_var-offset}
    \vspace{-3mm}
\end{figure}

First, due to the domain restriction of $\arcsin(\cdot)$, we must ensure that $\|\nabla \mathcal{R}(p)\| \leq 1$. 
If this constraint is violated, the displacement directions cannot be uniquely determined, which leads to unresolvable misalignments (see Figure~\ref{fig:sphere_var-offset}). 
As a result, this constraint limits the types of variable-radius inputs that can be handled, particularly those with extremely rapid variations in the offset distance.

Second, the proposed sampling strategies are heuristically designed to capture face-, vertex-, and edge-type features on triangle meshes, with the number of samples determined empirically. 
To ensure sufficiently uniform coverage, we employ blue-noise sampling, which may result in dense output meshes even for simple inputs. 
However, as shown in Section~\ref{sec:eval_robust}, the default settings work effectively for preprocessed input meshes from Thingi10K. 
Further, meshes with challenging geometries can be reliably remeshed using TetWild~\cite{ftetwild} to produce suitable inputs for our method.

Third, the output mesh is dense and of low quality (see Appendix E), even though the geometric shape is correct. This is because our method directly adopts the polygonal facets of the power diagram as offset surface patches, and each pair of samples corresponds to a polygon facet. The triangular faces are generated via ear clipping~\cite{eberly2008triangulation}, followed by post-processing to remove duplicate vertices and degenerate faces~\cite{libigl}. Similar low-quality meshes can be observed in naïve Voronoi-based MAT computations~\cite{li2015qmat,amenta2001power}. 
However, since the current applications of variable-radius offsets are primarily focused on surface design, having a mesh with correct geometry but suboptimal triangulation is acceptable. Post-processing techniques such as remeshing (e.g., TetWild~\cite{ftetwild}) can still be applied if higher mesh quality is required.

\section*{Acknowledgment} 

The authors thank the anonymous reviewers for their insightful comments and suggestions. 
This work was supported by the Joint Funds of the National Natural Science Foundation of China (U23A20312), the Natural Science Foundation of Shandong Province (ZR2025MS986), and the Key Research and Development Program of Shandong Province (2024TSGC0118).

\bibliographystyle{IEEEtran}
\bibliography{references}

\appendices
\section{Analysis of Misaligned Facets}
\label{sec:appendix misaligned_facets}

Based on Section 3.2, we can filter the offset facets. However, due to the discontinuities in real sampling, some additional facets are also included to preserve watertightness: 
\begin{equation} 
d_{pow}(x,p_i) = d_{pow}(x,p_j^\epsilon), \quad i \neq j. \end{equation} 
Meanwhile, we do not select the facets that satisfy: 
\begin{align} 
d_{pow}(x,p_i) &= d_{pow}(x,p_j),\quad i \neq j, \\
\text{or} \quad d_{pow}(x,p_i^\epsilon) &= d_{pow}(x,p_j^\epsilon),\quad i \neq j. \end{align}

To completely eliminate the misaligned facets, we require that for all $x \in \{x \mid d_{pow}(x,p_i) = d_{pow}(x,p_j^\epsilon)\}$, the following must hold: 
\begin{align} d_{pow}(x,p_j) &\leq d_{pow}(x,p_i) = d_{pow}(x,p_j^\epsilon), \quad i \neq j, \\
\text{or} \quad d_{pow}(x,p_i^\epsilon) &\leq d_{pow}(x,p_j^\epsilon) = d_{pow}(x,p_i), \quad i \neq j. 
\end{align}
Conversely, misaligned facets \textbf{exist} when: 
\begin{align} d_{pow}(x,p_j) &> d_{pow}(x,p_i) = d_{pow}(x,p_j^\epsilon), \quad i \neq j, \\
\text{and} \quad d_{pow}(x,p_i^\epsilon) &> d_{pow}(x,p_j^\epsilon) = d_{pow}(x,p_i), \quad i \neq j. 
\end{align}

For any four random sites, to eliminate both sides of misaligned facets, the following condition must be satisfied: \begin{equation} d_{pow}(x, p_i) = d_{pow}(x, p_i^\epsilon) = d_{pow}(x, p_j) = d_{pow}(x, p_j^\epsilon). \end{equation}
Figure~\ref{fig:illu_mis_pd} demonstrates the relationship between the regions for four random sites. Misaligned facets are almost inevitable in any sampling configuration when implementing \textit{OffsetCrust}. 

Figure~\ref{fig:pd_epsilon_misaligned} illustrates the configuration of sites used in \textit{OffsetCrust}. While $0 < \epsilon < \mathcal{R}(p)$ is theoretically effective, as analyzed in Section 3, selecting a small $\epsilon > 0$ improves stability when addressing misaligned facets. This is because the $\epsilon$-balls have radii comparable to the offset balls, making their interactions with other sites more similar to prevent uncontrollable misalignments (Figure~\ref{fig:illu_mis_eps}). As a result, local adjacency is better preserved, which benefits vertex optimization in Section 4.2.

\begin{figure}[!t]
    \centering
    \includegraphics[width=0.98\linewidth]{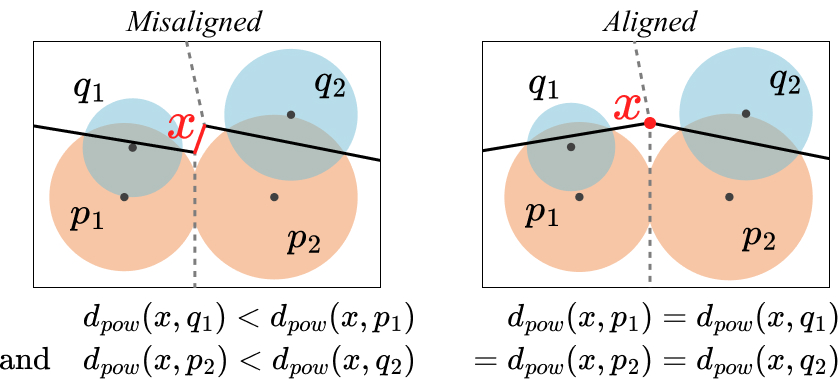}
    \caption{Power diagram of four random sites. }
    \label{fig:illu_mis_pd}
\end{figure}

\begin{figure}[!t]
    \centering
    \includegraphics[width=0.98\linewidth]{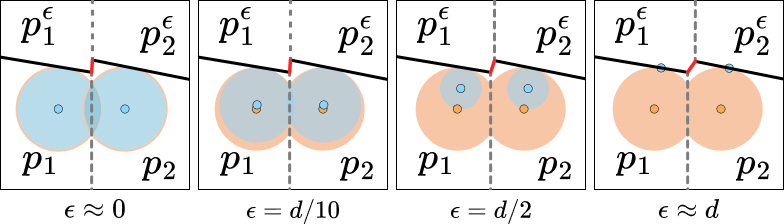}
    \caption{
    Configurations of sites in \emph{OffsetCrust}. With a larger $\epsilon$, the misaligned facets become more oblique relative to the sites, which can easily lead to inconsistent regions between $p_i$ and $p_i^\epsilon$.
}
    \label{fig:pd_epsilon_misaligned}
\end{figure}

\begin{figure}[!t]
    \centering
    \includegraphics[width=0.98\linewidth]{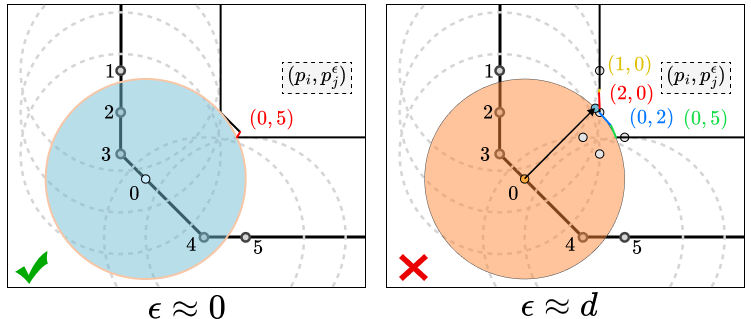}
    \caption{
    With smaller $\epsilon$, the misalignments are more localized and thus easier to optimize. In contrast, larger $\epsilon$ leads to more complex misalignment adjacencies, making optimization more difficult.}
    \label{fig:illu_mis_eps}
\end{figure}

\section{Ablation Study}
\label{sec:ablation_study}


\begin{table*}[!t]\fontsize{6.5pt}{6.5pt}\selectfont
\centering
\setlength{\tabcolsep}{1.8pt}
\caption{Ablation studies on the smooth kitten model to examine the effects of the number of blue noise samples, and on the sharp block model to test the size of the protected sharp regions and the level of sphere discretization. 
The gray-filled \colorbox[RGB]{237, 237, 237}{cells} show scores of optimized results. Among these, the best scores are emphasized in \underline{\textbf{bold}}, while the second-best scores are highlighted in \textbf{bold}. For the ablation study of block model, \textcolor[rgb]{0, 0, 0.6078}{blue} data indicate significant changes in the scores for each group of settings.
\label{tb:ablation_study}}
\vspace{0mm}
\renewcommand{\arraystretch}{1.6}
\begin{tabular}{cc|cccccccc||cccccccccccc}
\Xhline{2.3\arrayrulewidth}
\multicolumn{2}{c|}{\textit{Ours / Ours+}}                                  & \multicolumn{8}{c||}{\textbf{Kitten (Smooth)}}                                                                                                                                                                                               & \multicolumn{12}{c}{\textbf{Block (Sharp   Features)}}                                                                                                                                                                                                                                                                                                                                                                                                                                                                                                                                     \\ \hline
\multicolumn{2}{c|}{\cellcolor[HTML]{EDEDED}}                                    & \multicolumn{2}{c}{$B_1$}                                 & \multicolumn{2}{c}{$B_2$}                                 & \multicolumn{2}{c}{$B_3$}                                  & \multicolumn{2}{c||}{$B_4$}                                 & \multicolumn{2}{c}{$R_1$}                                                              & \multicolumn{2}{c}{$R_2$}                                                              & \multicolumn{2}{c|}{$R_3$}                                                                                                & \multicolumn{2}{c|}{$R_4$, $S_2$}                                                  & \multicolumn{2}{c}{$S_3$}                                                                             & \multicolumn{2}{c}{$S_1$}                                                              \\
\multicolumn{2}{c|}{\multirow{-2}{*}{\cellcolor[HTML]{EDEDED}\textit{Settings}}} & \multicolumn{2}{c}{\cellcolor[HTML]{EDEDED}\#blue = 35K} & \multicolumn{2}{c}{\cellcolor[HTML]{EDEDED}\#blue = 70K} & \multicolumn{2}{c}{\cellcolor[HTML]{EDEDED}\#blue = 140K} & \multicolumn{2}{c||}{\cellcolor[HTML]{EDEDED}\#blue = 550K} & \multicolumn{2}{c}{\cellcolor[HTML]{EDEDED}NaN}                   & \multicolumn{2}{c}{\cellcolor[HTML]{EDEDED}$\rho=70\%$}                  & \multicolumn{2}{c|}{\cellcolor[HTML]{EDEDED}$\rho=30\%$}                                                    & \multicolumn{2}{c|}{\cellcolor[HTML]{EDEDED}$\rho=5\%$, \#SV = 642} & \multicolumn{2}{c}{\cellcolor[HTML]{EDEDED}\#SV = 2562}                                              & \multicolumn{2}{c}{\cellcolor[HTML]{EDEDED}\#SV = 162}                                \\ \hline
\multicolumn{1}{c|}{}                                     & CD $(\times 10^4) \downarrow$             & 0.463           & \cellcolor[HTML]{EDEDED}0.442          & 0.297           & \cellcolor[HTML]{EDEDED}0.277          & 0.188       & \cellcolor[HTML]{EDEDED}\textbf{0.161}      & 0.046    & \cellcolor[HTML]{EDEDED}{\ul \textbf{0.044}}   & 4.276                         & \cellcolor[HTML]{EDEDED}3.986                         & 4.225                         & \cellcolor[HTML]{EDEDED}3.922                         & 4.190                         & \multicolumn{1}{c|}{\cellcolor[HTML]{EDEDED}\textbf{3.866}}                              & 4.050     & \multicolumn{1}{c|}{\cellcolor[HTML]{EDEDED}3.901}                   & {\color[HTML]{00009B} 2.475}  & \cellcolor[HTML]{EDEDED}{\color[HTML]{00009B} {\ul \textbf{2.170}}}  & {\color[HTML]{00009B} 11.365} & \cellcolor[HTML]{EDEDED}{\color[HTML]{00009B} 11.616} \\
\multicolumn{1}{c|}{}                                     & HD $(\times 10^2) \downarrow$              & 0.452           & \cellcolor[HTML]{EDEDED}0.394          & 0.332           & \cellcolor[HTML]{EDEDED}0.349          & 0.277       & \cellcolor[HTML]{EDEDED}\textbf{0.256}      & 0.078    & \cellcolor[HTML]{EDEDED}{\ul \textbf{0.054}}   & 2.751                         & \cellcolor[HTML]{EDEDED}1.150                         & 2.752                         & \cellcolor[HTML]{EDEDED}\textbf{1.061}                & 2.539                         & \multicolumn{1}{c|}{\cellcolor[HTML]{EDEDED}1.081}                                       & 3.449     & \multicolumn{1}{c|}{\cellcolor[HTML]{EDEDED}1.091}                   & {\color[HTML]{00009B} 3.358}  & \cellcolor[HTML]{EDEDED}{\color[HTML]{00009B} {\ul \textbf{1.040}}}  & {\color[HTML]{00009B} 3.134}  & \cellcolor[HTML]{EDEDED}{\color[HTML]{00009B} 3.053}  \\
\multicolumn{1}{c|}{}                                     & NC $(\times 10^2) \uparrow$               & 98.237          & \cellcolor[HTML]{EDEDED}99.654         & 98.538          & \cellcolor[HTML]{EDEDED}99.699         & 98.658      & \cellcolor[HTML]{EDEDED}\textbf{99.725}     & 99.719   & \cellcolor[HTML]{EDEDED}{\ul \textbf{99.930}}  & 99.258                        & \cellcolor[HTML]{EDEDED}99.902                        & 99.416                        & \cellcolor[HTML]{EDEDED}99.907                        & 99.446                        & \multicolumn{1}{c|}{\cellcolor[HTML]{EDEDED}99.908}                                      & 99.384    & \multicolumn{1}{c|}{\cellcolor[HTML]{EDEDED}\textbf{99.914}}         & {\color[HTML]{00009B} 99.471} & \cellcolor[HTML]{EDEDED}{\color[HTML]{00009B} {\ul \textbf{99.921}}} & {\color[HTML]{00009B} 99.187} & \cellcolor[HTML]{EDEDED}{\color[HTML]{00009B} 99.841} \\
\multicolumn{1}{c|}{\multirow{-4}{*}{-4\%}}               & Time (s)             & \multicolumn{2}{c}{44.535}                               & \multicolumn{2}{c}{68.653}                               & \multicolumn{2}{c}{107.935}                               & \multicolumn{2}{c||}{254.460}                              & \multicolumn{2}{c}{20.805}                                                            & \multicolumn{2}{c}{25.697}                                                            & \multicolumn{2}{c|}{27.910}                                                                                              & \multicolumn{2}{c|}{47.452}                                                      & \multicolumn{2}{c}{89.990}                                                                           & \multicolumn{2}{c}{40.651}                                                            \\ \hline
\multicolumn{1}{c|}{}                                     & CD $(\times 10^4) \downarrow$             & 0.600           & \cellcolor[HTML]{EDEDED}0.569          & 0.324           & \cellcolor[HTML]{EDEDED}0.303          & 0.186       & \cellcolor[HTML]{EDEDED}\textbf{0.162}      & 0.029    & \cellcolor[HTML]{EDEDED}{\ul \textbf{0.028}}   & 7.604                         & \cellcolor[HTML]{EDEDED}\textbf{6.987}                & 7.680                         & \cellcolor[HTML]{EDEDED}7.034                         & 7.563                         & \multicolumn{1}{c|}{\cellcolor[HTML]{EDEDED}7.139}                                       & 7.235     & \multicolumn{1}{c|}{\cellcolor[HTML]{EDEDED}7.148}                   & {\color[HTML]{00009B} 2.640}  & \cellcolor[HTML]{EDEDED}{\color[HTML]{00009B} {\ul \textbf{2.436}}}  & {\color[HTML]{00009B} 27.176} & \cellcolor[HTML]{EDEDED}{\color[HTML]{00009B} 28.004} \\
\multicolumn{1}{c|}{}                                     & HD $(\times 10^2) \downarrow$              & 0.320           & \cellcolor[HTML]{EDEDED}0.310          & 0.224           & \cellcolor[HTML]{EDEDED}0.287          & 0.145       & \cellcolor[HTML]{EDEDED}\textbf{0.080}      & 0.035    & \cellcolor[HTML]{EDEDED}{\ul \textbf{0.033}}   & 1.665                         & \cellcolor[HTML]{EDEDED}0.799                         & 2.719                         & \cellcolor[HTML]{EDEDED}0.872                         & 1.660                         & \multicolumn{1}{c|}{\cellcolor[HTML]{EDEDED}\textbf{0.783}}                              & 1.932     & \multicolumn{1}{c|}{\cellcolor[HTML]{EDEDED}0.908}                   & {\color[HTML]{00009B} 1.800}  & \cellcolor[HTML]{EDEDED}{\color[HTML]{00009B} {\ul \textbf{0.773}}}  & {\color[HTML]{00009B} 3.276}  & \cellcolor[HTML]{EDEDED}{\color[HTML]{00009B} 2.970}  \\
\multicolumn{1}{c|}{}                                     & NC $(\times 10^2) \uparrow$               & 98.480          & \cellcolor[HTML]{EDEDED}99.705         & 98.661          & \cellcolor[HTML]{EDEDED}99.737         & 98.823      & \cellcolor[HTML]{EDEDED}\textbf{99.763}     & 99.828   & \cellcolor[HTML]{EDEDED}{\ul \textbf{99.962}}  & 99.359                        & \cellcolor[HTML]{EDEDED}99.869                        & 99.517                        & \cellcolor[HTML]{EDEDED}99.877                        & 99.582                        & \multicolumn{1}{c|}{\cellcolor[HTML]{EDEDED}99.877}                                      & 99.530    & \multicolumn{1}{c|}{\cellcolor[HTML]{EDEDED}\textbf{99.877}}         & {\color[HTML]{00009B} 99.693} & \cellcolor[HTML]{EDEDED}{\color[HTML]{00009B} {\ul \textbf{99.902}}} & {\color[HTML]{00009B} 98.999} & \cellcolor[HTML]{EDEDED}{\color[HTML]{00009B} 99.657} \\
\multicolumn{1}{c|}{\multirow{-4}{*}{4\%}}                & Time (s)             & \multicolumn{2}{c}{34.090}                               & \multicolumn{2}{c}{54.574}                               & \multicolumn{2}{c}{105.528}                               & \multicolumn{2}{c||}{275.462}                              & \multicolumn{2}{c}{46.316}                                                            & \multicolumn{2}{c}{62.449}                                                            & \multicolumn{2}{c|}{63.389}                                                                                              & \multicolumn{2}{c|}{141.155}                                                     & \multicolumn{2}{c}{209.596}                                                                          & \multicolumn{2}{c}{86.397}                                                            \\ \hline
\multicolumn{1}{c|}{}                                     & CD $(\times 10^4) \downarrow$             & 0.261           & \cellcolor[HTML]{EDEDED}0.234          & 0.140           & \cellcolor[HTML]{EDEDED}0.125          & 0.076       & \cellcolor[HTML]{EDEDED}\textbf{0.068}      & 0.004    & \cellcolor[HTML]{EDEDED}{\ul \textbf{0.005}}   & {\color[HTML]{00009B} 4.312}  & \cellcolor[HTML]{EDEDED}{\color[HTML]{00009B} 4.098}  & {\color[HTML]{00009B} 2.237}  & \cellcolor[HTML]{EDEDED}{\color[HTML]{00009B} 2.054}  & {\color[HTML]{00009B} 1.149}  & \multicolumn{1}{c|}{\cellcolor[HTML]{EDEDED}{\color[HTML]{00009B} 0.709}}                & 1.409     & \multicolumn{1}{c|}{\cellcolor[HTML]{EDEDED}\textbf{0.664}}          & 1.349                         & \cellcolor[HTML]{EDEDED}{\ul \textbf{0.644}}                         & 1.419                         & \cellcolor[HTML]{EDEDED}0.740                         \\
\multicolumn{1}{c|}{}                                     & HD $(\times 10^2) \downarrow$              & 0.129           & \cellcolor[HTML]{EDEDED}0.085          & 0.100           & \cellcolor[HTML]{EDEDED}0.074          & 0.053       & \cellcolor[HTML]{EDEDED}\textbf{0.049}      & 0.006    & \cellcolor[HTML]{EDEDED}{\ul \textbf{0.005}}   & {\color[HTML]{00009B} 11.375} & \cellcolor[HTML]{EDEDED}{\color[HTML]{00009B} 9.506}  & {\color[HTML]{00009B} 10.558} & \cellcolor[HTML]{EDEDED}{\color[HTML]{00009B} 7.544}  & {\color[HTML]{00009B} 5.763}  & \multicolumn{1}{c|}{\cellcolor[HTML]{EDEDED}{\color[HTML]{00009B} {\ul \textbf{1.103}}}} & 7.602     & \multicolumn{1}{c|}{\cellcolor[HTML]{EDEDED}\textbf{2.442}}          & 6.593                         & \cellcolor[HTML]{EDEDED}3.247                                        & 6.545                         & \cellcolor[HTML]{EDEDED}4.442                         \\
\multicolumn{1}{c|}{}                                     & NC $(\times 10^2) \uparrow$               & 98.541          & \cellcolor[HTML]{EDEDED}99.774         & 98.915          & \cellcolor[HTML]{EDEDED}99.796         & 99.045      & \cellcolor[HTML]{EDEDED}\textbf{99.802}     & 99.887   & \cellcolor[HTML]{EDEDED}{\ul \textbf{99.976}}  & {\color[HTML]{00009B} 98.925} & \cellcolor[HTML]{EDEDED}{\color[HTML]{00009B} 99.720} & {\color[HTML]{00009B} 99.236} & \cellcolor[HTML]{EDEDED}{\color[HTML]{00009B} 99.847} & {\color[HTML]{00009B} 99.423} & \multicolumn{1}{c|}{\cellcolor[HTML]{EDEDED}{\color[HTML]{00009B} 99.954}}               & 99.397    & \multicolumn{1}{c|}{\cellcolor[HTML]{EDEDED}\textbf{99.961}}         & 99.382                        & \cellcolor[HTML]{EDEDED}{\ul \textbf{99.961}}                        & 99.397                        & \cellcolor[HTML]{EDEDED}99.951                        \\
\multicolumn{1}{c|}{\multirow{-4}{*}{-0.5\%}}             & Time (s)             & \multicolumn{2}{c}{95.022}                               & \multicolumn{2}{c}{121.724}                              & \multicolumn{2}{c}{172.397}                               & \multicolumn{2}{c||}{360.744}                              & \multicolumn{2}{c}{34.322}                                                            & \multicolumn{2}{c}{45.352}                                                            & \multicolumn{2}{c|}{41.786}                                                                                              & \multicolumn{2}{c|}{69.991}                                                      & \multicolumn{2}{c}{96.635}                                                                           & \multicolumn{2}{c}{55.133}                                                            \\ \hline
\multicolumn{1}{c|}{}                                     & CD $(\times 10^4) \downarrow$             & 0.290           & \cellcolor[HTML]{EDEDED}0.266          & 0.152           & \cellcolor[HTML]{EDEDED}0.137          & 0.081       & \cellcolor[HTML]{EDEDED}\textbf{0.072}      & 0.004    & \cellcolor[HTML]{EDEDED}{\ul \textbf{0.004}}   & {\color[HTML]{00009B} 13.033} & \cellcolor[HTML]{EDEDED}{\color[HTML]{00009B} 15.306} & {\color[HTML]{00009B} 5.957}  & \cellcolor[HTML]{EDEDED}{\color[HTML]{00009B} 6.649}  & {\color[HTML]{00009B} 0.949}  & \multicolumn{1}{c|}{\cellcolor[HTML]{EDEDED}{\color[HTML]{00009B} 1.097}}                & 0.665     & \multicolumn{1}{c|}{\cellcolor[HTML]{EDEDED}\textbf{0.723}}          & 0.587                         & \cellcolor[HTML]{EDEDED}{\ul \textbf{0.635}}                         & 1.048                         & \cellcolor[HTML]{EDEDED}1.085                         \\
\multicolumn{1}{c|}{}                                     & HD $(\times 10^2) \downarrow$              & 0.137           & \cellcolor[HTML]{EDEDED}0.115          & 0.077           & \cellcolor[HTML]{EDEDED}0.048          & 0.048       & \cellcolor[HTML]{EDEDED}\textbf{0.031}      & 0.006    & \cellcolor[HTML]{EDEDED}{\ul \textbf{0.004}}   & {\color[HTML]{00009B} 15.776} & \cellcolor[HTML]{EDEDED}{\color[HTML]{00009B} 16.139} & {\color[HTML]{00009B} 16.357} & \cellcolor[HTML]{EDEDED}{\color[HTML]{00009B} 11.761} & {\color[HTML]{00009B} 4.461}  & \multicolumn{1}{c|}{\cellcolor[HTML]{EDEDED}{\color[HTML]{00009B} 1.243}}                & 7.655     & \multicolumn{1}{c|}{\cellcolor[HTML]{EDEDED}{\ul \textbf{0.411}}}    & 8.837                         & \cellcolor[HTML]{EDEDED}1.097                                        & 7.910                         & \cellcolor[HTML]{EDEDED}\textbf{0.570}                \\
\multicolumn{1}{c|}{}                                     & NC $(\times 10^2) \uparrow$               & 98.593          & \cellcolor[HTML]{EDEDED}99.759         & 98.910          & \cellcolor[HTML]{EDEDED}99.781         & 98.998      & \cellcolor[HTML]{EDEDED}\textbf{99.790}     & 99.864   & \cellcolor[HTML]{EDEDED}{\ul \textbf{99.973}}  & {\color[HTML]{00009B} 97.530} & \cellcolor[HTML]{EDEDED}{\color[HTML]{00009B} 98.899} & {\color[HTML]{00009B} 98.452} & \cellcolor[HTML]{EDEDED}{\color[HTML]{00009B} 99.480} & {\color[HTML]{00009B} 99.497} & \multicolumn{1}{c|}{\cellcolor[HTML]{EDEDED}{\color[HTML]{00009B} 99.927}}               & 99.698    & \multicolumn{1}{c|}{\cellcolor[HTML]{EDEDED}\textbf{99.964}}         & 99.687                        & \cellcolor[HTML]{EDEDED}{\ul \textbf{99.968}}                        & 99.672                        & \cellcolor[HTML]{EDEDED}99.944                        \\
\multicolumn{1}{c|}{\multirow{-4}{*}{0.5\%}}              & Time (s)             & \multicolumn{2}{c}{53.826}                               & \multicolumn{2}{c}{74.450}                               & \multicolumn{2}{c}{126.879}                               & \multicolumn{2}{c||}{312.150}                              & \multicolumn{2}{c}{72.416}                                                            & \multicolumn{2}{c}{86.086}                                                            & \multicolumn{2}{c|}{76.644}                                                                                              & \multicolumn{2}{c|}{151.134}                                                     & \multicolumn{2}{c}{364.322}                                                                          & \multicolumn{2}{c}{106.454}                                                           \\ 
\Xhline{2.3\arrayrulewidth}
\end{tabular}
\end{table*}

\begin{figure*}[h]
    \centering
    \includegraphics[width=0.99\linewidth]{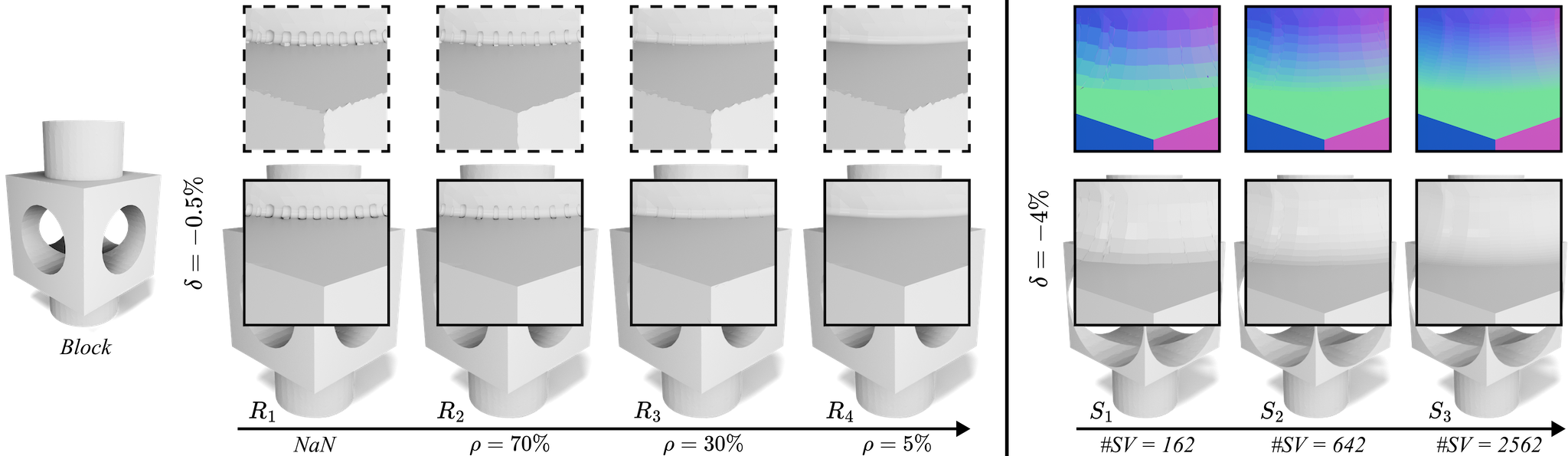}
    \caption{For the ablation study on the block model with sharp features, we use a high-quality mesh with $\#F = 40\text{K}$ and adopt centroids as blue-noise samples. The solid box displays the optimized results, while the dashed box shows the results before optimization. For small offset distances $\delta=-0.5\%$, a smaller $\rho$ can better handle the rounding behavior of sharp regions, reducing the side effects introduced by excessive samples from the discrete sphere \textit{(left)}. For large offset distances $\delta=-4\%$, a more finely discretized sphere with more vertices can provide \ZH{more natural and accurate} results \textit{(right)}.}
    \label{fig:ablation_block}
\end{figure*}



In our method, sampling plays a crucial role in determining the quality of the results. Specifically, our approach relies on four key components: blue noise sampling, clearance parameter \(\rho\), the level of 1vN sampling, and the dihedral angle threshold for detecting sharp feature lines. 


\textbf{Number of Blue-Noise Sampling.}
Blue-noise sampling uniformly captures the overall shape of the input model. To evaluate its effectiveness, we tested it on a smooth and detailed shape, \textit{Kitten} (Figure~\ref{fig:ablation_kitten_blue}). In general, denser sample points yield better results, regardless of whether the offset distance is small or large (Table~\ref{tb:ablation_study}, left). However, this improvement comes at the cost of increased computation time.

\begin{figure}[!t]
    \centering
    \includegraphics[width=0.99\linewidth]{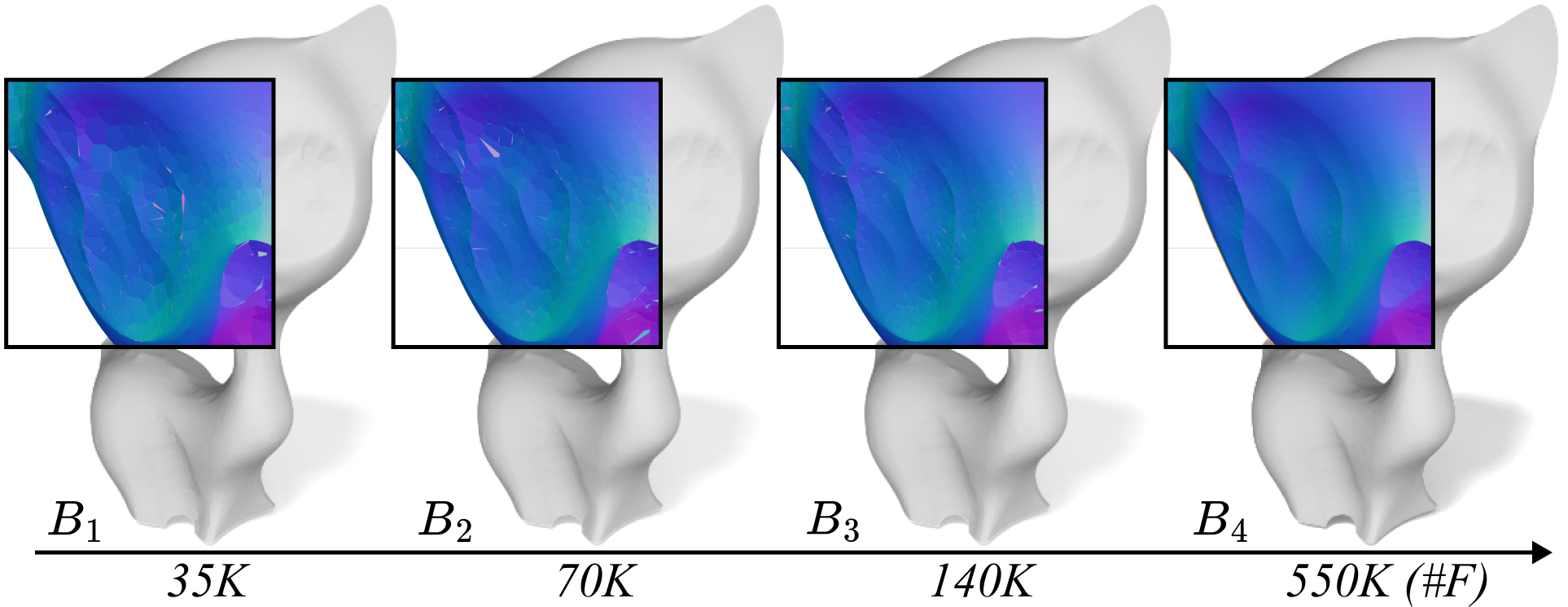}
    \caption{Different numbers of blue noise samples for the kitten model at a distance of $\delta = -4\%$. With more samples, the results are \ZH{more natural and accurate}.}
    \label{fig:ablation_kitten_blue}
\end{figure}

\textbf{Clearance Parameter \(\rho\).}
As shown in Table~\ref{tb:ablation_study}, this approach is particularly effective for small offset distances, with accuracy improving significantly as \(\rho\) decreases. Figure~\ref{fig:ablation_block} (left) illustrates the results for \(\delta = -0.5\%\). 

Without setting safe bands (dark-green region in the inset figures in Section~4.1), excessive sampling from the discrete sphere leads to noticeable protrusions. Even with misalignment elimination, these artifacts cannot be fully corrected. However, reducing \(\rho\) results in \ZH{natural} regions, even without the elimination step. In addition, concave regions can retain sharpness through optimization, and an appropriate choice of \(\rho\) can help preserve desirable features even before optimization is applied.
We empirically found that smaller offset distances benefit from a smaller \(\rho\), which better preserves sharp geometric features, while larger offsets can tolerate a relatively larger \(\rho\) without introducing noticeable artifacts.

\textbf{Level of 1vN Sampling.} 
\ZH{
For constant-radius offsets, we adopt a hybrid sampling strategy that combines spherical sampling at vertices with Slerp-based 1vN interpolation along edges. As described in Section~4.1, we place a discretized Gaussian spherical surface at each vertex to generate 1vN samples, while Slerp interpolation is employed for edge sampling. This design enables the generation of rounded offset geometry around both edges and vertices.
}

To evaluate the effect of the discretization level, we associate each Slerp interpolation with the corresponding resolution of the sphere. As shown in Table~\ref{tb:ablation_study}, increasing the discretization level consistently improves accuracy across all offset distances, with more pronounced effects observed at larger distances. Figure~\ref{fig:ablation_block} (right) illustrates the case for \(\delta = -4\%\). As more vertices are incorporated into the spherical surface, the rounded regions become \ZH{more natural}.

\begin{figure*}[!t]
    \centering
    \includegraphics[width=0.99\linewidth]{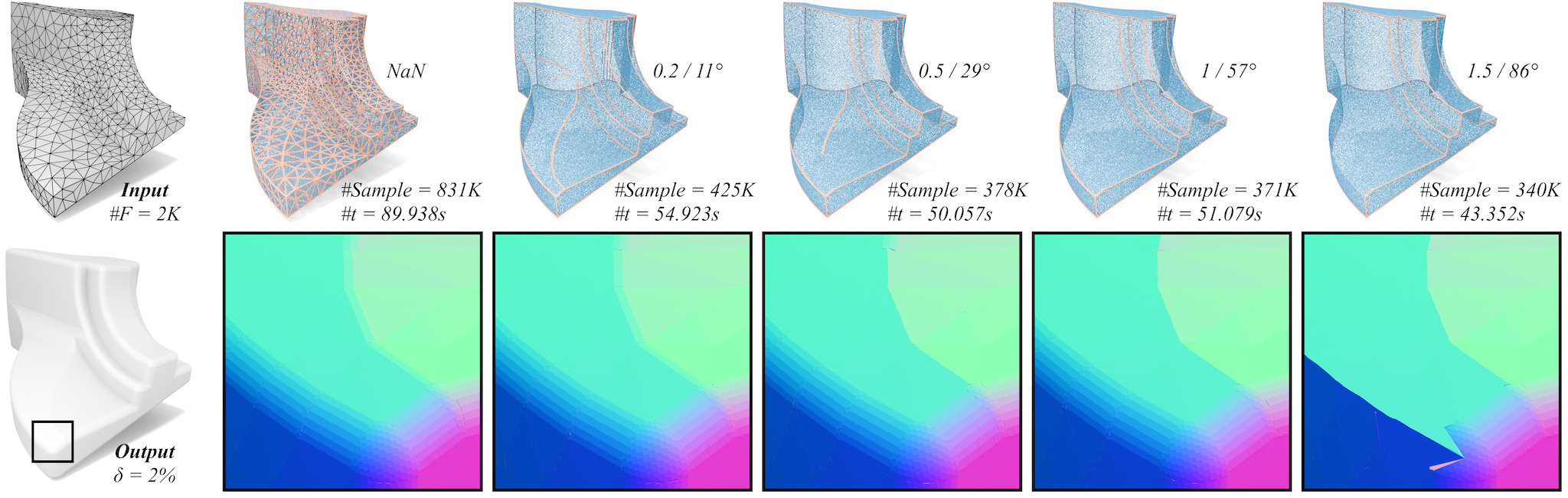}
    \caption{Ablation study on the Fandisk model with dihedral angle threshold for filtering. }
    \label{fig:ablation_feature}
    \vspace{-2mm}
\end{figure*}

\begin{table}[!t] \scriptsize
\centering
\caption{Ablation study on the Fandisk model evaluating the effects of our dihedral angle-based Convexity-Based Filtering.
\label{tb:ablation_convex}}
\vspace{0   mm}
\renewcommand{\arraystretch}{1.2}
\begin{tabular}{c|ccccc}
\Xhline{2.3\arrayrulewidth}
\textbf{Dihedral} $(rad/^{\circ})$ & \textbf{NaN}    & \textbf{0.2 / 11°} & \textbf{0.5 / 29°} & \textbf{1 / 57°} & \textbf{1.5 / 86°} \\ 
\hline
All Samples  & 831K   & 425K      & 378K      & 371K    & 340K      \\
CD $(\times 10^4) \downarrow$        & 0.099  & 0.105     & 0.134     & 0.175   & 2.213     \\
HD $(\times 10^2) \downarrow$        & 0.040  & 0.042     & 0.106     & 0.491   & 4.446     \\
NC $(\times 10^2) \uparrow$        & 99.959 & 99.944    & 99.936    & 99.924  & 99.341    \\
Time (s) & 89.938 & 54.923    & 50.057    & 51.079  & 43.352    \\ 
\Xhline{2.3\arrayrulewidth}
\end{tabular}
\end{table}

\textbf{Dihedral Angle Threshold for Identifying Feature Lines.}
To reduce computational overhead, we optionally pre-detect sharp features using a dihedral angle threshold, and apply the 1vN strategy only to those regions.

To evaluate its effectiveness, we conduct an ablation study using the \emph{Fandisk} model, which features intricate sharp structures, as shown in Figure~\ref{fig:ablation_feature}. All evaluation metrics and experimental settings are summarized in Table~\ref{tb:ablation_convex}.

Results indicate that disabling the filtering improves output quality, but at the cost of significantly increased computation time. As the dihedral angle threshold increases, fewer samples are retained around edges and vertices. This reduction lowers computational cost but leads to a decline in quality, particularly as the threshold approaches a right angle. To balance quality and efficiency, we adopt a dihedral angle threshold of 0.2 radians in all our experiments.

\section{Analysis of Clearance Around Mesh Edges}
\label{sec:appendix save_band_size}

\begin{figure}[!t]
    \centering
    \includegraphics[width=0.96\linewidth]{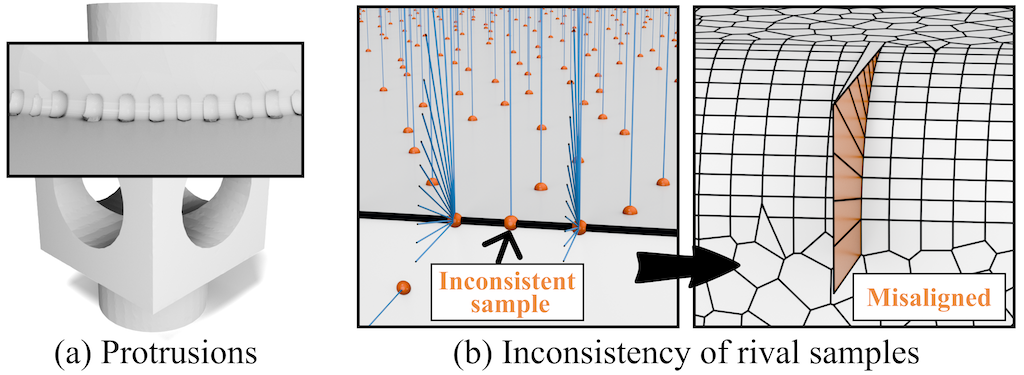}
    \caption{
    Two types of undesirable behaviors affect the quality of the offset surface: (a) Protrusions caused by redundant 1vN sampling. (b) Structural misalignment caused by local inconsistency in the behavior of a 1v1 sample near the 1vN region.
}
    \label{fig:bad_situations}
    \vspace{-5mm}
\end{figure}

\begin{figure}[!t]
    \centering
    \includegraphics[width=0.96\linewidth]{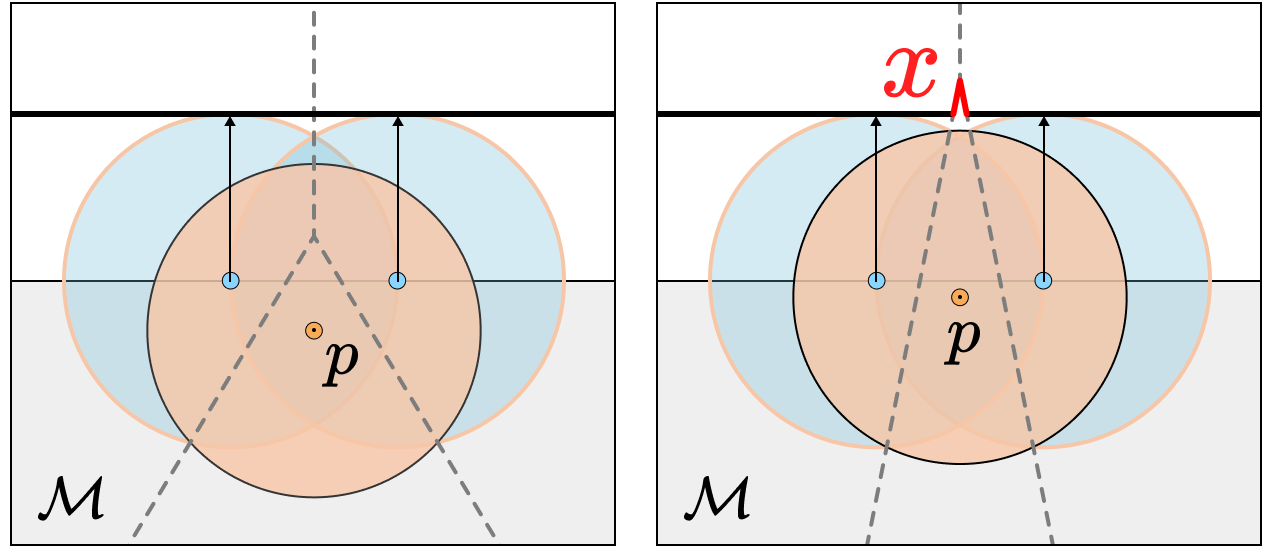}
    \caption{
    2D illustration of Figure~\ref{fig:bad_situations}. When the point $p$ is close to the edge and has significantly different directions, misalignments arise between $p$ and the on-edge Slerp $\epsilon$-samples.
}
\vspace{-5mm}
    \label{fig:safe_band_mis}
\end{figure}

As demonstrated in Figure~\ref{fig:safe_band_mis}, when the point $p$ is close to an edge and exhibits significantly different directions, misalignments can occur between $p$ and the on-edge Slerp $\epsilon$-samples.
These misalignments may result in incorrect adjacency relationships, leading to uncontrolled artifacts and negatively impacting the normal-based optimization.

\begin{wrapfigure}{r}{0.35\linewidth}
\vspace{2mm}
\includegraphics[width=.99\linewidth]{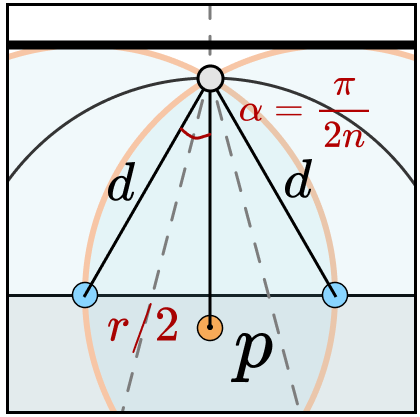}
\label{fig:band_size}
\vspace{0mm}
\end{wrapfigure} 
To prevent this issue, as illustrated in the inset figure, we consider the critical configuration in which the offset ball centered at point \(p\) just touches the intersection point of two on-edge samples, with the inter-sample distance denoted as \(r\). In the constant-radius case, \ZH{the size of safe band (dark-green region in the inset figures in Section 4.1), which is controlled by clearance parameter $\rho$}, is determined by the relation \(d \cdot (1 - \cos\alpha) \in (0, \frac{r}{2})\), making \(\frac{r}{2}\) a conservative and reliable choice. For the variable-radius case, as the samples are densely distributed and the offset distances vary smoothly, \(\frac{r}{2}\) is also generally a safe and effective choice.

\section{Misalignment Elimination Optimization}
\label{sec:optimization_terms}
The regularized optimization terms, using the homogeneous coordinate system, can be expressed as follows:
\begingroup 
\scriptsize
\begin{align}
E(v) =& \sum_{i=1}^k \left((v - p_i)\cdot \boldsymbol{n}_i - d\right)^2 + \lambda \|v - v_0\|^2 \\
= & \sum_{i=1}^k \left((v - p_i)\cdot \boldsymbol{n}_i - d\right)^2 
+ \lambda \left( v^T v - 2 v^T v_0 + v_0^T v_0 \right) \\
= & \sum_{i=1}^k 
    \left( \begin{bmatrix}
        v \\ 1
    \end{bmatrix}^T \begin{bmatrix}
        \boldsymbol{n}_i \\ -p_i \cdot \boldsymbol{n}_i - d_i
    \end{bmatrix} \right)^2 \nonumber \\
& \quad + 
    \lambda \left( 
    \begin{bmatrix}
        v \\ 1
    \end{bmatrix}^T \begin{bmatrix}
        I & 0 \\ 0 & 0
    \end{bmatrix} \begin{bmatrix}
        v \\ 1
    \end{bmatrix} 
    + \begin{bmatrix}
        v \\ 1
    \end{bmatrix}^T 
    \begin{bmatrix}
        2v_0 \\ v_0^T v_0
    \end{bmatrix} \right) \\
= & \begin{bmatrix}
        v \\ 1
    \end{bmatrix}^T \left(
    \sum_{i=1}^k 
    \begin{bmatrix}
        \boldsymbol{n}_i \\ -p_i \cdot \boldsymbol{n}_i - d_i
    \end{bmatrix}
    \begin{bmatrix}
        \boldsymbol{n}_i \\ -p_i \cdot \boldsymbol{n}_i - d_i
    \end{bmatrix}^T
     + \begin{bmatrix}
        \lambda I & 0 \\ 0 & 0
    \end{bmatrix} \right)
    \begin{bmatrix}
        v \\ 1
    \end{bmatrix} \nonumber \\
& \quad + 
    \begin{bmatrix}
        v \\ 1
    \end{bmatrix}^T
    \begin{bmatrix}
        -2\lambda v_0 \\ \lambda v_0^T v_0
    \end{bmatrix} \\
\Rightarrow E(\tilde{v})&= \tilde{v}^T \boldsymbol{\tilde{H}} \tilde{v} + \boldsymbol{\tilde{b}}^T \tilde{v}
\end{align}
\endgroup
This optimization problem can be solved by converting it into the equation $E'(\tilde{v}) = 0$, the solution is derived as follows:
\begingroup
\scriptsize
\begin{align}
    & E'(\tilde{v}) = 0 \\
    & \Rightarrow 2\boldsymbol{\tilde{H}} \tilde{v}  = -\boldsymbol{\tilde{b}} \\ 
    & \Rightarrow 2 \Bigg(
    \sum_{i=1}^k 
    \begin{bmatrix}
        \boldsymbol{n}_i \boldsymbol{n}_i^T & ( -p_i \cdot \boldsymbol{n}_i - d_i) \boldsymbol{n}_i \\[6pt] 
        ( -p_i \cdot \boldsymbol{n}_i - d_i) \boldsymbol{n}_i^T & ( -p_i \cdot \boldsymbol{n}_i - d_i)^2
    \end{bmatrix} 
    + 
    \begin{bmatrix}
        \lambda I & 0 \\[6pt]
        0 & 0
    \end{bmatrix} \Bigg)
    \begin{bmatrix}
        v \\ 1
    \end{bmatrix} \nonumber \\
    & \quad = 
    \begin{bmatrix}
        2\lambda v_0 \\ -\lambda v_0^T v_0
    \end{bmatrix} \\
    &\Rightarrow 
    \begin{bmatrix}
        \sum_{i=1}^k \boldsymbol{n}_i\boldsymbol{n}_i^T + \lambda I & \sum_{i=1}^k(-p_i \cdot \boldsymbol{n}_i - d_i)\boldsymbol{n}_i
    \end{bmatrix}\begin{bmatrix}
        v \\ 1
    \end{bmatrix} = \lambda v_0 \\
    &\Rightarrow 
    \left( \sum_{i=1}^k \boldsymbol{n}_i\boldsymbol{n}_i^T + \lambda I\right)v = \lambda v_0 + \sum_{i=1}^k(p_i \cdot \boldsymbol{n}_i + d_i)\boldsymbol{n}_i
\end{align}
\endgroup
Denote $\boldsymbol{H} = \sum_{i=1}^k \boldsymbol{n}_i\boldsymbol{n}_i^T + \lambda I$, $\boldsymbol{b}=\sum_{i=1}^k(p_i \cdot \boldsymbol{n}_i + d_i)\boldsymbol{n}_i$, the solution is $v = \boldsymbol{H}^{-1}\left(\lambda v_0 + \boldsymbol{b}\right)$. Since $\boldsymbol{n}_i\boldsymbol{n}_i^T$ is positive semi-definite and $I$ is positive definite, $H$ remains positive definite (and thus invertible) for any $\lambda > 0$. The $\lambda$ term is introduced to address the non-invertable issue by incorporating the initial vertex positions. As long as a small $\lambda > 0$ is chosen, the results remain stable, as its primary role is to ensure matrix invertibility.

\begin{figure*}[!htbp]
    \centering
    \includegraphics[width=.99\linewidth]{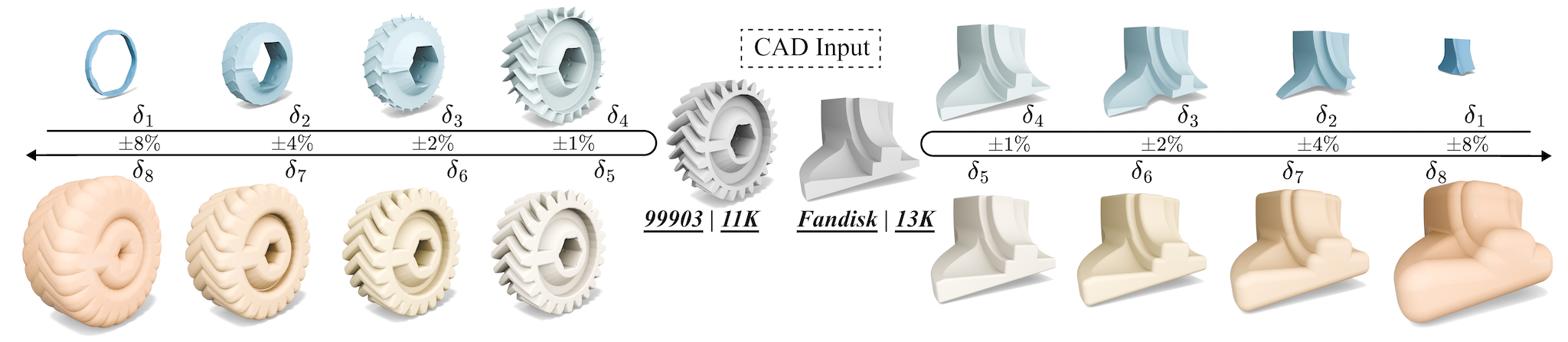}
    \vspace{-2mm}
    \caption{Qualitative results of the constant-radius offset of CAD models at various offset distances. The behaviors of sharp features in the offset are handled correctly, preserving sharpness in the inward offsets and roundness in the outward offsets. }
    \label{fig:robust_cad}
\end{figure*}

\begin{figure*}[!htbp]
    \centering
    \includegraphics[width=.99\linewidth]{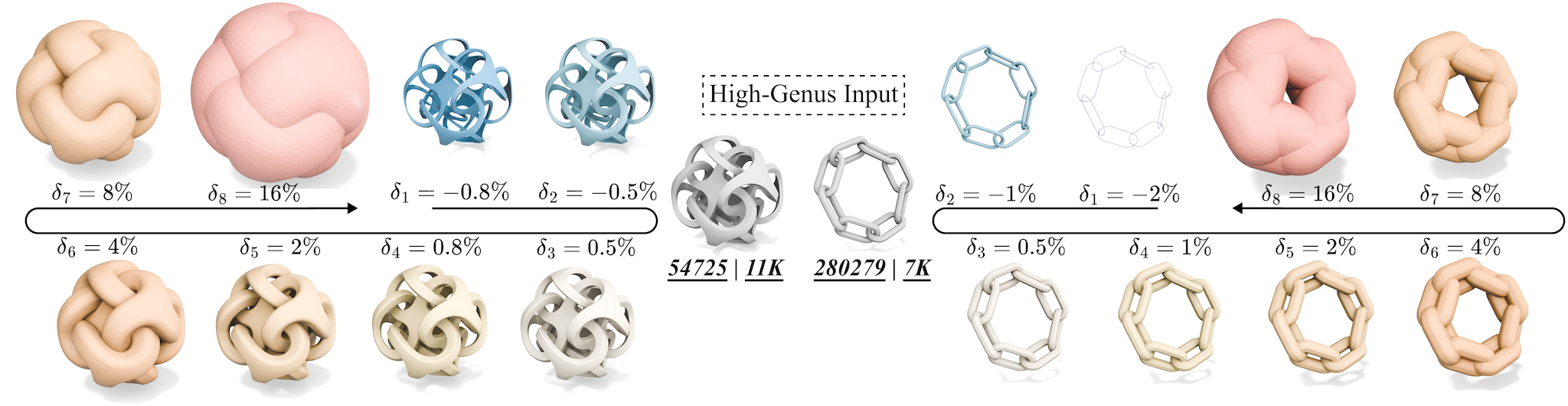}
    \vspace{-2mm}
    \caption{Qualitative results of the constant-radius offset of high-genus models at various offset distances. Topology changes are preserved correctly, increasing or maintaining the genus in inward offsets and reducing or maintaining it in outward offsets.}
    \label{fig:robust_genus}
\end{figure*}

\begin{figure*}[!htbp]
    \centering
    \includegraphics[width=.99\linewidth]{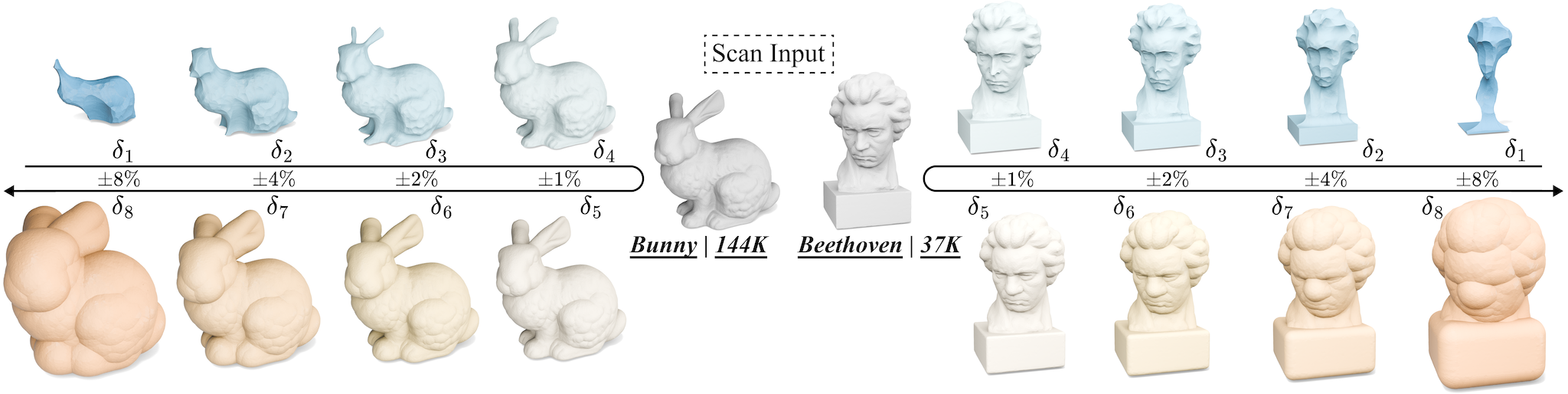}
    \vspace{-2mm}
    \caption{Qualitative results of the constant-radius offset of real scan models at various offset distances. The details are captured and processed correctly, and the features appearing in the offset surfaces are well preserved.}
    \label{fig:robust_scan}
\end{figure*}

\section{Statistics of Output Mesh Quality}
\label{sec:Statistics of Output Mesh Quality}

We assessed the quality of 7,609 output meshes in terms of manifoldness, self-intersections, and boundaries using evaluation functions from libigl. To utilize these existing functions, we triangulated the output polygonal facets. For each output, we computed the \textbf{proportions} of non-manifold edges (EN) and boundary edges (EB), non-manifold vertices (VN), and non-coplanar intersecting faces (FI) relative to the total number of edges, vertices, and faces. Table~\ref{tb:output quality} shows the statistics. 

\ZH{
As described in Section~7, the triangular faces are generated using the ear clipping algorithm, followed by a post-processing step that removes duplicate vertices and degenerate faces. 
Due to this polygon-based triangulation pipeline, a small number of non-manifold edges or vertices, boundary edges, and occasional self-intersections may arise. 
However, as demonstrated in Section~5.2, these artifacts have negligible impact on the geometric accuracy of the resulting offset surfaces.
}

\begin{table}[htbp]
\vspace{5mm}
\centering
\caption{Statistics for $\delta=2\%$ and $\delta=-2\%$ with raw and optimized results (ours/ours+).\label{tb:output quality}}
\scriptsize
\setlength{\tabcolsep}{1.1pt}
\renewcommand{\arraystretch}{1.2}
\begin{tabular}{c|c|cccc}
\Xhline{2.3\arrayrulewidth}
$\delta$ & Metric & EN & EB & VN & FI \\
\hline
\multirow{3}{*}{$2\%$} 
& min  & 0\%/0\%         & 0\%/0\%         & 0\%/0\%          & 0.001\%/2.128\% \\
& mean & 0\%/0.004\%     & 0\%/0\%         & 0.023\%/0.028\%  & 0.056\%/9.180\% \\
& max  & 0.001\%/0.528\% & 0\%/0\%         & 0.302\%/0.540\%  & 6.062\%/23.138\%\\
\hline
\multirow{3}{*}{$-2\%$}
& min  & 0\%/0\%         & 0\%/0\%         & 0\%/0\%          & 0\%/0.249\% \\
& mean & 0\%/0.002\%     & 0\%/0\%         & 0.022\%/0.023\%  & 0.076\%/11.759\% \\
& max  & 0.002\%/0.376\% & 0.0003\%/0.0003\% & 0.367\%/0.398\% & 5.128\%/66.176\% \\
\Xhline{2.3\arrayrulewidth}
\end{tabular}
\end{table}

\ZH{
\section{Results of Constant-Radius Offsets at Different Distances}
\label{sec:robust_distances}

We selected three types of models to compute offsets at various distances: CAD models (Figure~\ref{fig:robust_cad}), high-genus models (Figure~\ref{fig:robust_genus}) and real-scan models (Figure~\ref{fig:robust_scan}).  These results clearly demonstrate that our method robustly computes offsets for different model types across a wide range of distances, including both small and large. 

}

\section{Feature-Preserved Offset of Triangle Meshes}
\begin{figure}[!t]
    \centering
    \includegraphics[width=0.98\linewidth]{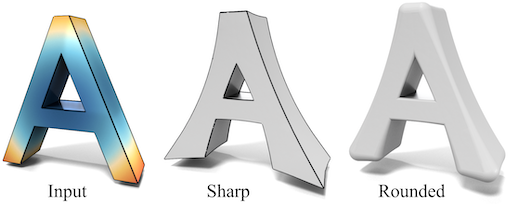}
\caption{
Given the letter ``A'' \textit{(left)}, OffsetCrust can produce either a feature-preserved offset \textit{(middle)} or a rounded-feature offset \textit{(right)} by controlling the displacement directions when generating edge-around and vertex-around displaced points.
}
    \label{fig:A_offset}
\end{figure}

\begin{figure}[!t]
    \centering
    \includegraphics[width=0.98\linewidth]{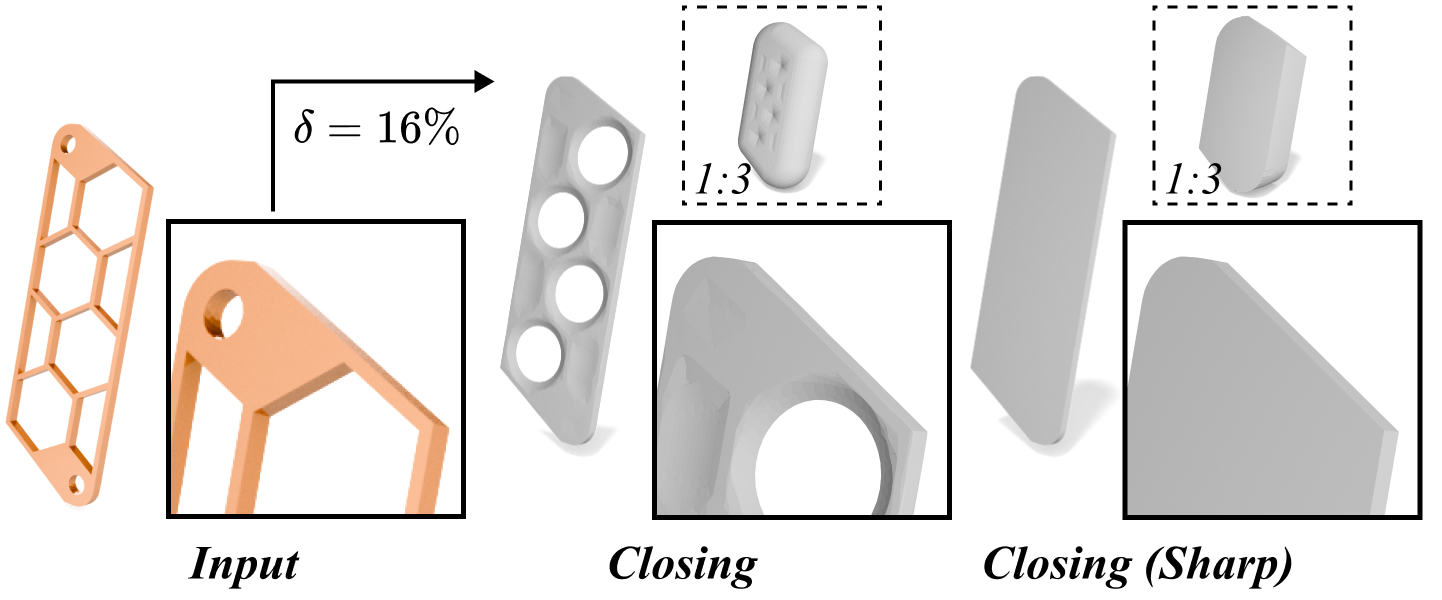}
    \caption{
\ZH{
A closing operation consists of dilation followed by erosion. By controlling the displacement directions, our \textit{OffsetCrust} can produce both standard smooth outcomes and sharp-featured variants.
Notably, this feature-preserved offset can modify the topology of the shape resulting in a genus-zero geometry.
}
\vspace{-2mm}
\label{fig:opening and closing}}
\end{figure}

Recall that when generating edge-around and vertex-around base samples, we employ a Slerp-like technique to produce displaced points. In fact, we can further restrict the set of allowable displacement directions. In extreme cases, intermediate directions can be skipped entirely, resulting in a \emph{feature-preserved} offset shape. For example, in Figure~\ref{fig:A_offset}, we use the letter ``A'' as input and generate both a feature-preserved offset and a rounded-feature offset.

We extend this behavior to morphological operations such as closing, which consists of dilation followed by erosion. By similarly controlling the displacement directions, our \textit{OffsetCrust} can produce either standard smooth outcomes or sharp-featured variants. Notably, this feature-preserved offset can change the genus of the shape (see Figure~\ref{fig:opening and closing}). \ZH{The intermediate results are remeshed using TetWild to serve as inputs for the subsequent stage.}

\end{document}